
\documentclass[12pt]{article}
\usepackage{amsfonts}
\usepackage{amsmath}
\usepackage{amsthm}

\newtheorem{theorem}{Theorem}
\newtheorem{lemma}{Lemma}
\newtheorem{proposition}{Proposition}

\title{The exponential of the spin representation of the Lorentz algebra}
\author{jason hanson\thanks{\tt jhanson$@$digipen.edu}}

\begin{document}
\maketitle

\begin{abstract}
As discussed in a previous article, any (real) Lorentz algebra element possess a unique orthogonal decomposition as a sum of two mutually annihilating decomposable Lorentz algebra elements.  In this article, this concept is extended to the spin representation of the Lorentz algebra.  As an application, a formula for the exponential of the spin representation is obtained, as well as a formula for the spin representation of a proper orthochronous Lorentz transformation.
\end{abstract}

\section{Orthogonal decomposition}

Let $g$ be a Lorentz metric on ${\mathbb R}^4$.  That is, $g$ is a symmetric nondegenerate inner product with determinant $-1$.  For our purposes, we need not specify a signature for $g$.  The {\em Lorentz group} $O(g)$ is the Lie group of linear transformations $\Lambda$ on ${\mathbb R}^4$ such that $\Lambda^Tg\Lambda=g$, and the {\em Lorentz algebra} $so(g)$ is the Lie algebra of transformations $L$ for which $L^Tg+gL=0$.

\subsection{Decomposition of bivectors}

Elements of $so(g)$ are called {\em Lorentz bivectors.}  A special type of Lorentz bivector is the {\bf simple bivector}, which takes the form $u\wedge^gv$ for four--vectors $u,v$ in ${\mathbb R}^4$, where
\begin{equation}\label{eq:wedge}
  (u\wedge^gv)(w)\doteq g(v,w)u-g(u,w)v
\end{equation}
when applied to the four--vector $w$.  In index notation, $(u\wedge^gv)_\beta^\alpha=u^\alpha v_\beta-v^\alpha u_\beta$.  Simple bivectors are also called {\em decomposable bivectors,} and are characterized by the condition $\det(u\wedge^gv)=0$.

While not every Lorentz bivector $L$ is simple, it is the sum of simple bivectors.  In fact, it can be shown that any nonsimple Lorentz bivector $L$ admits an {\bf orthogonal decomposition}: $L=L_++L_-$, with $L_\pm$ simple and $L_+L_-=0=L_-L_+$.  This decomposition is unique.  Indeed, the summands of the decomposition of $L$ are given by
\begin{equation}\label{eq:orthdecomp}
  L_\pm=\pm\frac{L^3-\mu_\mp L}{\mu_+-\mu_-}
\end{equation}
where $\mu_\pm$ are the positive and negative roots of the equation $x^2+({\rm tr}_2L)x+\det{L}=0$, or equivalently, the solutions of the simultaneous equations
\begin{equation}\label{eq:mu}
  \mu_++\mu_-=-{\rm tr}_2L
  \quad\text{and}\quad
  \mu_+\mu_-=\det{L}.
\end{equation}
Here ${\rm tr}_2L$ is the second order trace of $L$, which can be computed by the formula ${\rm tr}_2L=-\tfrac{1}{2}{\rm tr}L^2$ (this identity holds for any traceless matrix).  In particular, ${\rm tr}_2L_\pm=-\mu_\pm$.  See \cite{me} for details.

\subsection{Spin representation of the Lorentz algebra}

Representations of $so(g)$ may be constructed from representations of the Clifford algebra ${\mathcal Cl}(g)$ on ${\mathbb R}^4$.  Recall that ${\mathcal Cl}(g)$ is the quotient of the tensor algebra $T^\ast({\mathbb R}^4)$ by the subalgebra generated by the relation $uv+vu=2g(u,v)$ for $u,v\in{\mathbb R}^4$.  Let $\rho$ be a Clifford algebra representation; i.e., a (possibly complex) vector space $V$ and a linear map $\rho:{\mathcal Cl}(g)\rightarrow{\rm Hom}(V,V)$ that respects Clifford multiplication: $\rho(uv)=\rho(u)\rho(v)$.  We obtain the {\em spin representation} $\sigma:so(g)\rightarrow{\rm Hom}(V,V)$ by setting
\begin{equation}\label{eq:spinrep}
  \sigma(u\wedge^gv)
  \doteq\tfrac{1}{4}\rho(uv-vu)
\end{equation}
for simple Lorentz bivectors, and extending linearly to all of $so(g)$.  One shows that $\sigma$ is a Lie algebra homomorphism: $\sigma([L_1,L_2])=\sigma(L_1)\sigma(L_2)-\sigma(L_2)\sigma(L_1)$ for all Lorentz bivectors $L_1,L_2$ (see \cite{Fulton-Harris}, for example).

A natural choice for the representation $\rho$ would be gamma matrices; i.e., $\rho(u)\doteq u^\alpha\gamma_\alpha$.  However, a representation may be constructed directly from the Clifford algebra itself: view $V={\mathcal Cl}(g)$ as a sixteen--dimensional real vector space, and take $\rho$ to be the identity.  Unlike the gamma matrix representation, this is not an irreducible Clifford algebra representation.  In the following, we will not have the need to make a specific choice for $\rho$, and we simply refer to $\sigma$ as ``the'' spin representation of $so(g)$.  We remark that all formulas, with the exception of those that appear in section \ref{sec:digression}, are actually valid for summands of the spin representation.  In particular, they are valid for half--spin representations.

They key property of the spin representation $\sigma$ that we will make use of is the following, which makes apparent the usefulness of decomposing a bivector into a sum of simple bivectors.  Here we write $I=\rho(1)$.

\begin{theorem}\label{thm:Lsq}
Suppose $L=u\wedge^gv$ is a simple Lorentz bivector.  Then ${\rm tr}_2L=g(u,u)g(v,v)-g(u,v)^2$ and $\sigma(L)^2=-\tfrac{1}{4}({\rm tr}_2L)I$.
\end{theorem}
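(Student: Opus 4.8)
The plan is to prove the two assertions separately: the trace identity by a direct matrix computation, and the formula for $\sigma(L)^2$ by reducing a product inside the Clifford algebra.

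For the trace identity I would start from ${\rm tr}_2 L = -\tfrac12\,{\rm tr}\,L^2$ (valid since $L$ is traceless) and compute $L^2$ explicitly. Using the index form $(u\wedge^g v)^\alpha_\beta = u^\alpha v_\beta - v^\alpha u_\beta$, multiplying two copies and contracting with $u^\beta v_\beta = g(u,v)$, $u^\beta u_\beta = g(u,u)$, $v^\beta v_\beta = g(v,v)$, one obtains ${\rm tr}\,L^2 = 2g(u,v)^2 - 2g(u,u)g(v,v)$, hence ${\rm tr}_2 L = g(u,u)g(v,v) - g(u,v)^2$. Equivalently one can apply $L$ twice to a generic four--vector via (\ref{eq:wedge}) and read off the trace of the resulting endomorphism of ${\mathbb R}^4$.

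For the square of $\sigma(L)$, note that since $\rho$ respects Clifford multiplication we have $\sigma(L)^2 = \tfrac{1}{16}\rho\big((uv-vu)^2\big)$, so it suffices to evaluate $(uv-vu)^2$ inside ${\mathcal Cl}(g)$ and then apply $\rho$, remembering $\rho(1)=I$. Here I would use the defining relation $uv+vu = 2g(u,v)$, together with its special case $u^2 = g(u,u)$, to rewrite $uv-vu = 2uv - 2g(u,v)\,1$ and $uvuv = u(vu)v = 2g(u,v)\,uv - g(u,u)g(v,v)\,1$. Substituting these collapses $(uv-vu)^2$ to the scalar $-4\big(g(u,u)g(v,v) - g(u,v)^2\big)\,1$, which is $-4({\rm tr}_2 L)\,1$ by the first part. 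Applying $\rho$ and dividing by $16$ yields $\sigma(L)^2 = -\tfrac14({\rm tr}_2 L)I$.

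All the computations are elementary; the only place requiring care is the manipulation in the Clifford algebra, where the noncommutativity of Clifford multiplication forces one to keep the factors in order and apply the anticommutation relation in precisely the right spots --- e.g., recognizing that $u(vu)v = u\big(2g(u,v)-uv\big)v$ is exactly what produces the cancellation of the $g(u,v)\,uv$ terms and leaves a scalar.
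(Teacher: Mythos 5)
Your proposal is correct and follows essentially the same route as the paper: the trace identity by direct computation from the index form of $u\wedge^gv$, and the formula for $\sigma(L)^2$ by expanding $(uv-vu)^2$ in ${\mathcal Cl}(g)$ using the relation $uv+vu=2g(u,v)$ to collapse it to the scalar $-4\,{\rm tr}_2L$. Your rewriting $uv-vu=2uv-2g(u,v)$ before squaring is only a cosmetic reorganization of the paper's term-by-term reduction of $uvuv-uv^2u-vu^2v+vuvu$.
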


\begin{proof}
From equation \eqref{eq:wedge}, one computes that ${\rm tr}_2L=-\tfrac{1}{2}{\rm tr}L^2$ is given by the stated expression (see also \cite{me}).  Now compute using equation \eqref{eq:spinrep} and the Clifford algebra relation:
\begin{align*}
  \sigma(u\wedge^gv)^2
  &=\tfrac{1}{16}\,\rho(uvuv-uv^2u-vu^2v+vuvu)\\
  &=\tfrac{1}{16}\,\rho\bigl(u[-uv+2g(v,u)]v-g(v,v)g(u,u)\\
  & \quad\quad\quad\quad
    -g(u,u)g(v,v)+v[-vu+2g(u,v)]u\bigr)\\
  &=\tfrac{1}{16}\,\rho\bigl(2g(u,v)(uv+vu)-4g(u,u)g(v,v)\bigr)
\end{align*}
which implies the stated expression for $\sigma(L)^2$.
\end{proof}

\subsection{Decomposition of a spin representation}

If $L$ is a Lorentz bivector, then $L^3$ is as well.  So we may apply the spin representation directly to each summand in equation \eqref{eq:orthdecomp} to obtain $\sigma(L_\pm)$ in terms of $\sigma(L)$ and $\sigma(L^3)$.  However, we would like an expression that involves only powers of $\sigma(L)$.

\begin{theorem}\label{thm:Lcube}
If $L=L_++L_-$ is the orthogonal decomposition of a nonsimple Lorentz bivector, then
$$\sigma(L_\pm)
  =\frac{\pm 2}{\mu_+-\mu_-}
   \left\{\tfrac{1}{4}(\mu_\mp+3\mu_\pm)\sigma(L)
          -\sigma(L)^3\right\}
$$
with $\mu_\pm$ as in equation \eqref{eq:mu}.
\end{theorem}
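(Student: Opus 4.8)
The plan is to sidestep $\sigma(L^3)$ entirely and work directly with the two simple summands $L_\pm$, using only that $\sigma$ is a Lie algebra homomorphism together with Theorem~\ref{thm:Lsq}. First I would record two facts about the operators $A_\pm\doteq\sigma(L_\pm)$. Since $L_+L_-=0=L_-L_+$, the bracket $[L_+,L_-]$ vanishes, so $[A_+,A_-]=\sigma([L_+,L_-])=0$; that is, $A_+$ and $A_-$ commute. And since each $L_\pm$ is simple with ${\rm tr}_2L_\pm=-\mu_\pm$, Theorem~\ref{thm:Lsq} gives $A_\pm^2=-\tfrac14({\rm tr}_2L_\pm)I=\tfrac14\mu_\pm I$.

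Next I would expand $\sigma(L)^3=(A_++A_-)^3$, which is legitimate since $A_+$ and $A_-$ commute, so that every term is a product of powers of $A_+$ and $A_-$; then I reduce each such product using $A_\pm^2=\tfrac14\mu_\pm I$ (for instance $A_+^3=\tfrac14\mu_+A_+$ and $A_+^2A_-=\tfrac14\mu_+A_-$, and similarly with the roles of $+$ and $-$ interchanged). Collecting terms should yield
$$
  \sigma(L)^3=\tfrac14(\mu_++3\mu_-)A_+ + \tfrac14(3\mu_++\mu_-)A_-,
$$
while of course $\sigma(L)=A_++A_-$. These two relations form a linear system for the unknowns $A_+,A_-$ (regarded as vectors in ${\rm Hom}(V,V)$) whose scalar coefficient matrix has determinant $\tfrac14\bigl[(3\mu_++\mu_-)-(\mu_++3\mu_-)\bigr]=\tfrac12(\mu_+-\mu_-)$, which is nonzero precisely because $L$ is nonsimple. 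Solving for $A_\pm$ and collecting the coefficient of $\sigma(L)$ into $\tfrac14(\mu_\mp+3\mu_\pm)$ produces the stated formula.

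The only real subtlety is that $\sigma$ is merely a Lie algebra homomorphism and not an associative algebra homomorphism, so in general $\sigma(L)^3\neq\sigma(L^3)$ and one cannot just apply $\sigma$ to equation~\eqref{eq:orthdecomp}. The point is that the two structural facts actually used — commutativity of $A_\pm$ and the scalar squares $A_\pm^2=\tfrac14\mu_\pm I$ — follow from properties that $\sigma$ \emph{does} respect (the bracket, via $[L_+,L_-]=0$, and Theorem~\ref{thm:Lsq}), so the whole computation takes place on the representation side with no need to evaluate $\sigma(L^3)$. As a sanity check one can verify afterward that the resulting $\sigma(L_\pm)$ satisfy $\sigma(L_\pm)^2=\tfrac14\mu_\pm I$, but this is not required for the argument.
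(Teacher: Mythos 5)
Your proposal is correct and follows essentially the same route as the paper: expand $\sigma(L)^3=(\sigma(L_+)+\sigma(L_-))^3$, reduce powers with $\sigma(L_\pm)^2=\tfrac14\mu_\pm I$ from Theorem~\ref{thm:Lsq}, and solve the resulting $2\times 2$ linear system whose determinant is $\tfrac12(\mu_+-\mu_-)$. Your explicit remarks that commutativity of $\sigma(L_\pm)$ justifies the binomial expansion, and that one cannot simply apply $\sigma$ to equation~\eqref{eq:orthdecomp}, are welcome clarifications of points the paper leaves implicit.
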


\begin{proof}
Since ($\ast$) $\sigma(L)=\sigma(L_+)+\sigma(L_-)$, we have that $\sigma(L)^3=\sigma(L_+)^3+3\sigma(L_+)^2\sigma(L_-)+3\sigma(L_+)\sigma(L_-)+\sigma(L_-)^3$.  Using theorem \ref{thm:Lsq} to reduce powers, we may rewrite this as ($\ast\ast$) $\sigma(L)^3=\tfrac{1}{4}(\mu_++3\mu_-)\sigma(L_+)+\tfrac{1}{4}(\mu_-+3\mu_+)\sigma(L_-)$.  The determinant of the linear system ($\ast$) and ($\ast\ast$) is $\tfrac{1}{2}(\mu_+-\mu_-)$, which is nonzero if $L$ is nonsimple, and the system may be solved to yield the stated expressions for $\sigma(L_\pm)$.
\end{proof}

The summands of the orthogonal decomposition of a nonsimple Lorentz bivector are mutually annihilating.  Although their images under the spin representation do not share this property, they do commute.

\begin{theorem}\label{thm:commute}
If $L=L_++L_-$ is the orthogonal decomposition of the nonsimple Lorentz bivector $L$, then $\sigma(L_+)\sigma(L_-)=\sigma(L_-)\sigma(L_+)=\tfrac{1}{8}({\rm tr}_2L)I+\tfrac{1}{2}\sigma(L)^2$.
\end{theorem}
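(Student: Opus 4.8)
The plan is to split the claim into two parts: first, that $\sigma(L_+)$ and $\sigma(L_-)$ commute; second, that their product equals $\tfrac18({\rm tr}_2L)I+\tfrac12\sigma(L)^2$. The commutativity is essentially free. By definition of the orthogonal decomposition, $L_+L_-=0=L_-L_+$, so the Lie bracket $[L_+,L_-]=L_+L_--L_-L_+$ vanishes. Since $\sigma$ is a Lie algebra homomorphism, $\sigma(L_+)\sigma(L_-)-\sigma(L_-)\sigma(L_+)=\sigma([L_+,L_-])=\sigma(0)=0$, which is the first asserted equality. (This step does not rely on Theorem \ref{thm:Lcube}.)

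For the value, I would square the identity $\sigma(L)=\sigma(L_+)+\sigma(L_-)$. Using the commutativity just established to merge the two cross terms, $\sigma(L)^2=\sigma(L_+)^2+2\,\sigma(L_+)\sigma(L_-)+\sigma(L_-)^2$. Each $L_\pm$ is simple, so Theorem \ref{thm:Lsq} gives $\sigma(L_\pm)^2=-\tfrac14({\rm tr}_2L_\pm)I=\tfrac14\mu_\pm I$, where I use the relation ${\rm tr}_2L_\pm=-\mu_\pm$ recorded just below equation \eqref{eq:mu}. Hence $\sigma(L_+)^2+\sigma(L_-)^2=\tfrac14(\mu_++\mu_-)I=-\tfrac14({\rm tr}_2L)I$ by the first equation of \eqref{eq:mu}. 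Substituting and solving the resulting identity for $\sigma(L_+)\sigma(L_-)$ yields $\sigma(L_+)\sigma(L_-)=\tfrac12\sigma(L)^2+\tfrac18({\rm tr}_2L)I$, as desired.

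I do not foresee a real obstacle: once one sees that commutativity follows formally from $\sigma$ being a Lie homomorphism together with the mutual annihilation of $L_+$ and $L_-$, the rest is a one-line symmetrization. The only thing to watch is the sign bookkeeping relating ${\rm tr}_2L$, $-\tfrac12{\rm tr}\,L^2$, and the roots $\mu_\pm$. A direct alternative would be to insert the explicit formulas for $\sigma(L_\pm)$ from Theorem \ref{thm:Lcube} and multiply out, but that introduces higher powers of $\sigma(L)$ that must be reduced using Theorem \ref{thm:Lsq} and the commutativity above, making it considerably more laborious with nothing gained.
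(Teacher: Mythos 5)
Your proposal is correct and follows essentially the same route as the paper: commutativity from $[L_+,L_-]=0$ together with $\sigma$ being a Lie algebra homomorphism, then squaring $\sigma(L)=\sigma(L_+)+\sigma(L_-)$ and reducing $\sigma(L_\pm)^2$ via Theorem \ref{thm:Lsq} and equation \eqref{eq:mu}. The sign bookkeeping (${\rm tr}_2L_\pm=-\mu_\pm$, $\mu_++\mu_-=-{\rm tr}_2L$) is handled correctly, so nothing further is needed.
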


\begin{proof}
As $L_+,L_-$ trivially commute, $[\sigma(L_+),\sigma(L_-)]=\sigma([L_+,L_-])=0$.  By theorem \ref{thm:Lsq} and equation \eqref{eq:mu}, we then have $\sigma(L)^2=\sigma(L_+)^2+2\sigma(L_+)\sigma(L_-)+\sigma(L_-)^2=2\sigma(L_+)\sigma(L_-)-\tfrac{1}{4}({\rm tr}_2L)I$.
\end{proof}

\subsection{A computational digression}\label{sec:digression}

To use the formula in theorem \ref{thm:Lcube}, we need to know the values of $\mu_\pm$, which are obtained from the invariants ${\rm tr}_2L$ and $\det{L}$ of $L$.  However, we would like to deduce these values in the event we only have knowledge of $\sigma(L)$.

\begin{lemma}
For any four--vectors $a,b,u,v$
$${\rm tr}\rho(abuv)
  ={\rm tr}I\left\{g(a,b)g(u,v)-g(a,u)g(b,v)
                   +g(a,v)g(b,v)\right\}.\qed
$$
\end{lemma}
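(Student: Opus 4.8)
The plan is to derive the identity from just two ingredients: the Clifford relation, which under the homomorphism $\rho$ reads $\rho(u)\rho(v)+\rho(v)\rho(u)=2g(u,v)I$, and the cyclic invariance of the trace, ${\rm tr}\rho(xy)={\rm tr}(\rho(x)\rho(y))={\rm tr}(\rho(y)\rho(x))={\rm tr}\rho(yx)$, which holds whether $V$ is real or complex. No vanishing statement such as ${\rm tr}\rho(u)=0$ will be needed.

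First I would dispose of the quadratic case. Applying the trace to the Clifford relation $\rho(uv+vu)=2g(u,v)I$ and using cyclicity to identify ${\rm tr}\rho(uv)={\rm tr}\rho(vu)$ gives $2\,{\rm tr}\rho(uv)=2g(u,v)\,{\rm tr}I$, hence ${\rm tr}\rho(uv)=g(u,v)\,{\rm tr}I$.

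For the quartic case, I would use the relation in the form $\rho(a)\rho(w)=2g(a,w)I-\rho(w)\rho(a)$ to transport the leftmost factor $\rho(a)$ to the far right, past $\rho(b)$, then $\rho(u)$, then $\rho(v)$. Each of the three transpositions contributes a scalar ($2g$ of two of the vectors) times the surviving quadratic product, together with an overall sign change, so that $\rho(abuv)=2g(a,b)\rho(uv)-2g(a,u)\rho(bv)+2g(a,v)\rho(bu)-\rho(buva)$. Taking the trace, invoking cyclicity to replace ${\rm tr}\rho(buva)$ by ${\rm tr}\rho(abuv)$, and collecting the two copies of ${\rm tr}\rho(abuv)$ yields $2\,{\rm tr}\rho(abuv)=2g(a,b)\,{\rm tr}\rho(uv)-2g(a,u)\,{\rm tr}\rho(bv)+2g(a,v)\,{\rm tr}\rho(bu)$. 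Substituting the quadratic formula from the previous step then gives ${\rm tr}\rho(abuv)={\rm tr}I\{g(a,b)g(u,v)-g(a,u)g(b,v)+g(a,v)g(b,u)\}$, which is the assertion (the last term being $g(a,v)g(b,u)$).

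The computation is entirely routine; the only point that demands attention is the sign bookkeeping across the three anticommutations — in particular, that after transporting $\rho(a)$ past an odd number (three) of factors the recycled term $\rho(buva)$ reappears with a minus sign, which is precisely what keeps the equation for ${\rm tr}\rho(abuv)$ nontrivial rather than tautological. Everything else follows from linearity in each of $a,b,u,v$ separately, so one need only verify the statement on the products of basis vectors if a more pedestrian route is preferred.
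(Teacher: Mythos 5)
Your argument is correct, and it is complete where the paper is not: the paper offers no proof of this lemma at all, merely remarking that it ``generalizes the well--known identity for gamma matrices'' and that the computation need not be repeated. Your derivation uses only the Clifford relation $\rho(u)\rho(v)+\rho(v)\rho(u)=2g(u,v)I$ and cyclicity of the trace, which is exactly the right level of generality: it never invokes gamma--matrix--specific facts (such as ${\rm tr}\,\gamma_\alpha=0$ or irreducibility), so it justifies the identity for an arbitrary, possibly reducible, representation $\rho$ --- e.g.\ ${\mathcal Cl}(g)$ acting on itself with ${\rm tr}\,I=16$ --- which is precisely the generalization the paper is implicitly claiming. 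Both steps check out: the quadratic identity ${\rm tr}\,\rho(uv)=g(u,v)\,{\rm tr}\,I$ follows from tracing the anticommutation relation, and the anticommutation of $\rho(a)$ past the three remaining factors gives
$\rho(abuv)=2g(a,b)\rho(uv)-2g(a,u)\rho(bv)+2g(a,v)\rho(bu)-\rho(buva)$,
whose trace, after the cyclic identification ${\rm tr}\,\rho(buva)={\rm tr}\,\rho(abuv)$, yields the stated formula. One further point in your favor: your final expression has last term $g(a,v)g(b,u)$, which is the correct one; the lemma as printed reads $g(a,v)g(b,v)$, a typographical slip, as one sees from the usual gamma--matrix trace identity and from your computation.
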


\noindent
This generalizes the well--known identity for gamma matrices, so we need not repeat the computation here.  We do note however, that ${\rm tr}I={\rm tr}\rho(1)$ is the dimension of the representation: ${\rm tr}I=16$ for the Clifford algebra ${\mathcal Cl}(g)$ itself, and ${\rm tr}I=4$ for the gamma matrix representation.

\begin{lemma}
If $L=L_++L_-$ is the orthogonal decomposition of a Lorentz bivector with $L_+=a\wedge^gb$ and $L_-=u\wedge^gv$, then
$$g(a,v)g(b,u)-g(a,u)g(b,v)=0.$$
\end{lemma}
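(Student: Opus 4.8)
The plan is to use the one defining feature of the orthogonal decomposition that has not yet been exploited in this section: the summands annihilate one another, $L_+L_-=0$ (the commutativity used in the proof of Theorem~\ref{thm:commute} is strictly weaker). In particular ${\rm tr}(L_+L_-)=0$, so it suffices to show that ${\rm tr}(L_+L_-)$ equals twice the left-hand side of the asserted identity.

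First I would compute the composite operator $L_+L_-=(a\wedge^gb)(u\wedge^gv)$ directly from \eqref{eq:wedge}. Applying it to a general four-vector $w$, expand $(u\wedge^gv)(w)=g(v,w)u-g(u,w)v$ and then apply $a\wedge^gb$; this produces an expression of the form $\alpha(w)\,a-\beta(w)\,b$ in which $\alpha(w)=g(b,u)g(v,w)-g(b,v)g(u,w)$ and $\beta(w)=g(a,u)g(v,w)-g(a,v)g(u,w)$ are linear in $w$.

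Next I would read off the trace. A linear operator of the form $w\mapsto\ell_1(w)\,a+\ell_2(w)\,b$ has trace $\ell_1(a)+\ell_2(b)$, since each rank-one piece $w\mapsto\ell(w)c$ contributes $\ell(c)$. Hence ${\rm tr}(L_+L_-)=\alpha(a)-\beta(b)$; expanding this and using the symmetry of $g$, the $g(a,b)$- and $g(u,v)$-type terms cancel and one is left with ${\rm tr}(L_+L_-)=2\bigl(g(a,v)g(b,u)-g(a,u)g(b,v)\bigr)$. Combining with ${\rm tr}(L_+L_-)=0$ and dividing by $2$ gives the lemma.

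I do not anticipate a genuine obstacle; the only thing to watch is the bookkeeping of signs in the rank-two trace. (Alternatively one could substitute cleverly chosen vectors into $L_+L_-=0$, or pass through the spin representation via the preceding trace lemma together with Theorem~\ref{thm:commute}, but taking the trace of $L_+L_-$ is the most direct route and uses nothing about $\sigma$ at all.)
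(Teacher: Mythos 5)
Your proposal is correct and is essentially the paper's own argument: compute $L_+L_-$ from equation \eqref{eq:wedge}, take the trace to get $2\bigl(g(a,v)g(b,u)-g(a,u)g(b,v)\bigr)$, and conclude from $L_+L_-=0$. (One small bookkeeping remark: in the trace $\alpha(a)-\beta(b)$ there are no $g(a,b)$- or $g(u,v)$-type terms to cancel; rather the two like terms combine to give the factor of $2$, as your final expression correctly shows.)
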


\begin{proof}
From equation \eqref{eq:wedge}, one computes $(a\wedge^gb)(u\wedge^gv)=g(b,u)av^Tg-g(b,v)au^Tg-g(a,u)bv^Tg+g(a,v)bu^Tg$.  Taking the trace of both sides, we get ${\rm tr}\{(a\wedge^gb)(u\wedge^gv)\}=2g(b,u)g(v,a)-2g(a,u)g(v,b)$, which is necessarily zero, since $(a\wedge^gb)(u\wedge^gv)=L_+L_-=0$.
\end{proof}

\begin{theorem}
If $L=L_++L_-$ is the orthogonal decomposition of a Lorentz bivector, then ${\rm tr}\{\sigma(L_+)\sigma(L_-)\}=0$.
\end{theorem}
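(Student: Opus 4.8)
The plan is to push everything down to the Clifford algebra and then invoke the two lemmas of this section. Since the orthogonal decomposition is unique and its summands are simple, write $L_+=a\wedge^gb$ and $L_-=u\wedge^gv$ for suitable four--vectors $a,b,u,v$; if $L$ happens to be simple then $L_-=0$ and the claim is immediate, so assume $L$ is nonsimple. The definition \eqref{eq:spinrep} of the spin representation gives
$$\sigma(L_+)\sigma(L_-)
  =\tfrac{1}{16}\,\rho\bigl((ab-ba)(uv-vu)\bigr)
  =\tfrac{1}{16}\,\rho(abuv-abvu-bauv+bavu).$$

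Next I would take the trace and apply the first lemma of this section to each of the four resulting terms, after the obvious relabeling of its four vector arguments in each case. Writing $P=g(a,b)g(u,v)$, $Q=g(a,u)g(b,v)$ and $R=g(a,v)g(b,u)$, that lemma evaluates each ${\rm tr}\,\rho(\cdot)$ as ${\rm tr}I$ times $P-Q+R$ or $P-R+Q$, according to the ordering; forming the combination ${\rm tr}\,\rho(abuv)-{\rm tr}\,\rho(abvu)-{\rm tr}\,\rho(bauv)+{\rm tr}\,\rho(bavu)$, the $P$ contributions cancel (with signs $+1,-1,-1,+1$) while the $Q$ and $R$ contributions reinforce, leaving
$${\rm tr}\{\sigma(L_+)\sigma(L_-)\}
  =\tfrac{1}{4}\,{\rm tr}I\,\bigl\{g(a,v)g(b,u)-g(a,u)g(b,v)\bigr\}.$$
The quantity in braces is exactly what the second lemma of this section asserts to vanish, and the theorem follows.

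The only step that needs care is this middle one: one must stay consistent about which vector plays the role of each argument in the first lemma across all four traces and track the signs so that precisely the $P$--terms, and nothing else, cancel. I do not expect any genuine obstacle beyond that bookkeeping. As a consistency check, taking the trace of Theorem \ref{thm:commute} shows the present statement is equivalent to ${\rm tr}\,\sigma(L)^2=-\tfrac{1}{4}({\rm tr}_2L)\,{\rm tr}I$, which extends the identity $\sigma(L)^2=-\tfrac14({\rm tr}_2L)I$ of Theorem \ref{thm:Lsq} from simple to arbitrary Lorentz bivectors at the level of traces; the direct computation above delivers exactly this.
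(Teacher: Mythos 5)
Your proof is correct and follows essentially the same route as the paper's: write $L_\pm$ as simple bivectors, expand $\sigma(L_+)\sigma(L_-)=\tfrac{1}{16}\rho(abuv-abvu-bauv+bavu)$, take the trace via the first lemma, and conclude by the orthogonality identity $g(a,v)g(b,u)-g(a,u)g(b,v)=0$ of the second lemma. You merely fill in the sign bookkeeping (and implicitly use the correct form of the trace identity, whose statement in the paper has an obvious typo), so nothing further is needed.
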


\begin{proof}
Write $L_+=a\wedge^gb$ and $L_-=u\wedge^gv$.  Then $\sigma(L_+)\sigma(L_-)=\frac{1}{16}\rho(ab-ba)\rho(uv-vu)=\tfrac{1}{16}\bigl(\rho(abuv)-\rho(abvu)-\rho(bauv)+\rho(bavu)\bigr)$.  Take the trace and apply the previous two lemmas.
\end{proof}

\begin{theorem}
For any Lorentz bivector $L$, ${\rm tr}_2L=-4\,{\rm tr}\sigma(L)^2/{\rm tr}I$ and $\det{L}=4\,{\rm tr}\sigma(L)^4/{\rm tr}I-4\,{\rm tr}^2\sigma(L)^2/{\rm tr}^2I$.
\end{theorem}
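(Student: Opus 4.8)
The plan is to reduce everything to the orthogonal decomposition $L=L_++L_-$ of Theorem~\ref{thm:Lcube}, together with the two facts that $\sigma(L_+)$ and $\sigma(L_-)$ commute (Theorem~\ref{thm:commute}) and that ${\rm tr}\{\sigma(L_+)\sigma(L_-)\}=0$ (the theorem just proved). Throughout I will abbreviate $S=\sigma(L_+)\sigma(L_-)=\sigma(L_-)\sigma(L_+)$.

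First I would treat the nonsimple case. Since $\sigma(L)=\sigma(L_+)+\sigma(L_-)$, expanding the square and using Theorem~\ref{thm:Lsq} (which gives $\sigma(L_\pm)^2=-\tfrac14({\rm tr}_2L_\pm)I=\tfrac14\mu_\pm I$) yields $\sigma(L)^2=\tfrac14(\mu_++\mu_-)I+2S$. Taking the trace and using ${\rm tr}S=0$ together with $\mu_++\mu_-=-{\rm tr}_2L$ from equation~\eqref{eq:mu} gives ${\rm tr}\,\sigma(L)^2=-\tfrac14({\rm tr}_2L)\,{\rm tr}I$, which is the first asserted identity.

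For the second identity I would square the expression $\sigma(L)^2=\tfrac14(\mu_++\mu_-)I+2S$ to obtain $\sigma(L)^4=\tfrac1{16}(\mu_++\mu_-)^2I+(\mu_++\mu_-)S+4S^2$. Because $\sigma(L_+)$ and $\sigma(L_-)$ commute, $S^2=\sigma(L_+)^2\sigma(L_-)^2=\tfrac1{16}\mu_+\mu_-\,I=\tfrac1{16}(\det L)I$ by Theorem~\ref{thm:Lsq} and equation~\eqref{eq:mu}. Taking the trace, and again invoking ${\rm tr}S=0$ and $\mu_++\mu_-=-{\rm tr}_2L$, gives ${\rm tr}\,\sigma(L)^4=\tfrac1{16}({\rm tr}_2L)^2\,{\rm tr}I+\tfrac14(\det L)\,{\rm tr}I$. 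Solving for $\det L$ and substituting the first identity $({\rm tr}_2L)^2=16\,{\rm tr}^2\sigma(L)^2/{\rm tr}^2I$ produces the stated formula.

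Finally I would dispatch the simple case, where the decomposition is unavailable but $\det L=0$. Here Theorem~\ref{thm:Lsq} directly gives $\sigma(L)^2=-\tfrac14({\rm tr}_2L)I$, hence ${\rm tr}\,\sigma(L)^2=-\tfrac14({\rm tr}_2L)\,{\rm tr}I$ and $\sigma(L)^4=\tfrac1{16}({\rm tr}_2L)^2I$, so both sides of each identity match (the right side of the determinant formula collapsing to zero). Alternatively one could note that nonsimple bivectors are dense and both sides are continuous in $L$. The only real subtlety is this bookkeeping split between the simple and nonsimple cases; the algebra in the nonsimple case is routine once the commuting and vanishing-trace facts are in hand, so I do not expect a genuine obstacle.
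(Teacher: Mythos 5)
Your argument is correct and follows essentially the same route as the paper: expand powers of $\sigma(L)=\sigma(L_+)+\sigma(L_-)$, reduce squares via Theorem~\ref{thm:Lsq}, kill the cross term with ${\rm tr}\{\sigma(L_+)\sigma(L_-)\}=0$, and translate via equation~\eqref{eq:mu} (your first identity is just the trace of Theorem~\ref{thm:commute} re-derived). The only addition is your explicit treatment of the simple case, which the paper leaves implicit; that is a reasonable bit of bookkeeping, not a different method.
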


\begin{proof}
For the first formula, take the trace of the formula in theorem \ref{thm:commute}.  For the second, use the orthogonal decomposition $L=L_++L_-$ and theorem \ref{thm:Lsq} to compute $\sigma(L)^4=\{\sigma(L_+)+\sigma(L_-)\}^4=\tfrac{1}{16}\mu_+^2I-\mu_+\sigma(L_+)\sigma(L_-)+\tfrac{3}{8}\mu_+\mu_-I-\mu_-\sigma(L_+)\sigma(L_-)+\tfrac{1}{16}\mu_-^2I$.  Taking traces, we get ${\rm tr}\sigma(L)^4=\tfrac{1}{16}({\rm tr}I)(\mu_+^2+6\mu_+\mu_-+\mu_-^2)=\tfrac{1}{16}({\rm tr}I)\{(\mu_++\mu_-)^2+4\mu_+\mu_-\}=\tfrac{1}{16}({\rm tr}I)({\rm tr}_2^2L+4\det{L})$, courtesy of equation \eqref{eq:mu}.  Solving for $\det{L}$ and using the first formula yields the desired expression.
\end{proof}

\section{Exponential of the spin representation}

By general principles, the exponential operation $\exp(L)\doteq\sum_{n\geq 0}L^n/n!$ is a map $\exp:so(g)\rightarrow O(g)$, whose image is the connected component $SO^+(g)$ of $O(g)$ containing the identity transformation; i.e., the set of all proper orthochronous Lorentz transformations.  A closed formula for $\exp(L)$ was obtained in \cite{Coll}.  Here we give a closed expression for $\exp(\sigma(L))$ for both simple and nonsimple Lorentz bivectors.

\begin{theorem}\label{thm:expsimple}
If $L$ is a simple Lorentz bivector, then $\exp(\sigma(L))=\bar{c}+\bar{s}\sigma(L)$, where
\begin{align*}
  \text{if ${\rm tr}_2L>0$,}\quad
  & \bar{c}\doteq\cos\tfrac{1}{2}\sqrt{{\rm tr}_2L}
  &\text{and}\quad
  & \bar{s}\doteq\frac{2}{\sqrt{{\rm tr}_2L}}\sin\tfrac{1}{2}\sqrt{{\rm tr}_2L}\\
  \text{if ${\rm tr}_2L<0$,}\quad
  & \bar{c}\doteq\cosh\tfrac{1}{2}\sqrt{-{\rm tr}_2L}
  &\text{and}\quad
  & \bar{s}\doteq\frac{2}{\sqrt{-{\rm tr}_2L}}\sinh\tfrac{1}{2}\sqrt{-{\rm tr}_2L}
\end{align*}
and if ${\rm tr}_2L=0$, then $\bar{c}=1=\bar{s}$.
\end{theorem}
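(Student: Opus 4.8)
The plan is to reduce the exponential series to a two-term expression by exploiting the fact that $\sigma(L)^2$ is a scalar multiple of the identity when $L$ is simple. Concretely, Theorem \ref{thm:Lsq} gives $\sigma(L)^2=-\tfrac{1}{4}({\rm tr}_2L)\,I$, so writing $X=\sigma(L)$ we have $X^2=\lambda I$ with $\lambda=-\tfrac14{\rm tr}_2L$. An induction on $n$ then shows $X^{2k}=\lambda^k I$ and $X^{2k+1}=\lambda^k X$, so the even and odd parts of $\exp(X)=\sum_n X^n/n!$ separate: $\exp(X)=\bigl(\sum_k \lambda^k/(2k)!\bigr)I+\bigl(\sum_k \lambda^k/(2k+1)!\bigr)X$. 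The coefficient of $I$ is $\bar c$ and the coefficient of $X$ is $\bar s$, once $\lambda$ is expressed in terms of ${\rm tr}_2L$; this is exactly the content of the claimed formula.

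The remaining work is to identify the two power series $\sum_k \lambda^k/(2k)!$ and $\sum_k \lambda^k/(2k+1)!$ as the stated elementary functions, splitting into the three sign cases for ${\rm tr}_2L$. If ${\rm tr}_2L>0$ then $\lambda=-\tfrac14{\rm tr}_2L<0$; writing $\lambda=-\theta^2$ with $\theta=\tfrac12\sqrt{{\rm tr}_2L}$, one recognizes $\sum_k(-\theta^2)^k/(2k)!=\cos\theta$ and $\sum_k(-\theta^2)^k/(2k+1)!=\theta^{-1}\sin\theta$, which gives $\bar c=\cos\tfrac12\sqrt{{\rm tr}_2L}$ and $\bar s=\frac{2}{\sqrt{{\rm tr}_2L}}\sin\tfrac12\sqrt{{\rm tr}_2L}$ after substituting $\theta^{-1}=2/\sqrt{{\rm tr}_2L}$. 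If ${\rm tr}_2L<0$ then $\lambda>0$; writing $\lambda=\phi^2$ with $\phi=\tfrac12\sqrt{-{\rm tr}_2L}$, the same series become $\cosh\phi$ and $\phi^{-1}\sinh\phi$, yielding the hyperbolic case. If ${\rm tr}_2L=0$ then $\lambda=0$, so $X^2=0$, the series truncate after the linear term, and $\exp(X)=I+X$, i.e. $\bar c=\bar s=1$.

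I do not anticipate a genuine obstacle here — the argument is a direct computation — but the one point requiring a little care is the bookkeeping between $\lambda$ and ${\rm tr}_2L$, in particular tracking the factor $\tfrac14$ so that the half-angle $\tfrac12\sqrt{\cdot}$ and the prefactor $2/\sqrt{\cdot}$ in $\bar s$ come out correctly. It is also worth remarking, though not strictly needed for the statement, that convergence of all these series is automatic since they are rearrangements of the globally convergent exponential series, so the even/odd regrouping is legitimate.
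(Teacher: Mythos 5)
Your argument is correct and is essentially the paper's own proof: both reduce powers of $\sigma(L)$ via Theorem \ref{thm:Lsq}, split the exponential series into even and odd parts, and sum them as cosine/sine or hyperbolic cosine/sine according to the sign of ${\rm tr}_2L$, with the nilpotent case ${\rm tr}_2L=0$ handled trivially. No gaps; the bookkeeping of the factor $\tfrac14$ matches the paper's choice $\theta=\tfrac12\sqrt{{\rm tr}_2L}$.
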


\begin{proof}
In the case ${\rm tr}_2L>0$, theorem \ref{thm:Lsq} implies $\sigma(L)^{2p}=(-\theta^2)^p=(-1)^p\theta^{2p}$, where $\theta\doteq\tfrac{1}{2}\sqrt{{\rm tr}_2L}$.  Consequently, we may write $\sigma(L)^{2p+1}=\sigma(L)^{2p}\sigma(L)=(-1)^p\theta^{2p+1}\sigma(L)/\theta$.  Thus the series $\sum_{n\geq 0}\sigma(L)^n/n!=\sum_{p\geq 0}\sigma(L)^{2p}/(2p)!+\sum_{p\geq 0}\sigma(L)^{2p+1}/(2p+1)!$ is summed using the usual Taylor series expansion for sine and cosine.  The case when ${\rm tr}_2L<0$ is similar, and the case ${\rm tr}_2L=0$ is trivial.
\end{proof}

Recall that $\exp(A+B)=\exp(A)\exp(B)$ whenever the matrices $A,B$ commute.  Thus, $\exp(\sigma(L))=\exp(\sigma(L_+))\exp(\sigma(L_-))$.  Theorems \ref{thm:commute} and \ref{thm:expsimple} then lead to the following.

\begin{theorem}\label{thm:exp1}
Suppose $L=L_++L_-$ is the orthogonal decomposition of a nonsimple Lorentz bivector.  Define $\theta_\pm\doteq\tfrac{1}{2}\sqrt{\mp{\rm tr}_2L_\pm}$, $\bar{c}_+\doteq\cosh\theta_+$, $\bar{c}_-\doteq\cos\theta_-$, $\bar{s}_+\doteq\sinh\theta_+/\theta_+$, and $\bar{s}_-\doteq\sin\theta_-/\theta_-$.  Then
$$\exp(\sigma(L))
  =\bar{c}_+\bar{c}_-
   +\bar{s}_+\bar{c}_-\,\sigma(L_+)
   +\bar{c}_+\bar{s}_-\,\sigma(L_-)
   +\bar{s}_+\bar{s}_-\,\sigma(L_+)\sigma(L_-)\qed
$$
\end{theorem}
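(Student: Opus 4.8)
The plan is to reduce everything to the simple-bivector case already handled in Theorem~\ref{thm:expsimple}, together with the commuting-product identity of Theorem~\ref{thm:commute}. First I would invoke the multiplicativity of $\exp$ on commuting arguments: since $L_+L_-=0=L_-L_+$, the bivectors $L_\pm$ commute, hence so do $\sigma(L_+)$ and $\sigma(L_-)$ (indeed $[\sigma(L_+),\sigma(L_-)]=\sigma([L_+,L_-])=0$, as noted in the proof of Theorem~\ref{thm:commute}), and therefore $\exp(\sigma(L))=\exp(\sigma(L_+)+\sigma(L_-))=\exp(\sigma(L_+))\exp(\sigma(L_-))$.

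Next I would apply Theorem~\ref{thm:expsimple} to each factor. Each $L_\pm$ is simple, so $\exp(\sigma(L_\pm))=\bar c_\pm + \bar s_\pm\,\sigma(L_\pm)$ with the scalars determined by the sign of ${\rm tr}_2L_\pm$. The key sign fact, which I would justify explicitly, is that in the orthogonal decomposition of a nonsimple bivector one always has ${\rm tr}_2L_+=-\mu_+<0$ and ${\rm tr}_2L_-=-\mu_-$ with $\mu_-<0$, so ${\rm tr}_2L_+<0$ and ${\rm tr}_2L_->0$. This is precisely why the hyperbolic branch of Theorem~\ref{thm:expsimple} applies to the $+$ summand and the trigonometric branch to the $-$ summand, matching the definitions $\theta_\pm=\tfrac12\sqrt{\mp{\rm tr}_2L_\pm}$, $\bar c_+=\cosh\theta_+$, $\bar c_-=\cos\theta_-$, $\bar s_+=\sinh\theta_+/\theta_+$, $\bar s_-=\sin\theta_-/\theta_-$. (One should also remark on the degenerate cases where $\mu_+$ or $\mu_-$ vanishes, handled by the ${\rm tr}_2L=0$ clause of Theorem~\ref{thm:expsimple} and the continuity of the hyperbolic/trigonometric expressions at $0$.)

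Finally I would simply multiply the two binomials:
\begin{align*}
  \exp(\sigma(L))
  &=\bigl(\bar c_+ + \bar s_+\,\sigma(L_+)\bigr)\bigl(\bar c_- + \bar s_-\,\sigma(L_-)\bigr)\\
  &=\bar c_+\bar c_- + \bar s_+\bar c_-\,\sigma(L_+) + \bar c_+\bar s_-\,\sigma(L_-) + \bar s_+\bar s_-\,\sigma(L_+)\sigma(L_-),
\end{align*}
which is the claimed formula. No reordering subtleties arise because $\sigma(L_+)$ and $\sigma(L_-)$ commute, so the cross term is unambiguous (and by Theorem~\ref{thm:commute} it can alternatively be written $\tfrac18({\rm tr}_2L)I+\tfrac12\sigma(L)^2$, though the statement leaves it as $\sigma(L_+)\sigma(L_-)$).

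The only real obstacle is the sign bookkeeping in the second step: one must be certain that $\mu_+>0>\mu_-$ forces exactly one hyperbolic and one trigonometric factor, and that the square roots in $\theta_\pm$ are real. This follows from the construction of $\mu_\pm$ in Section~1 as the positive and negative roots of $x^2+({\rm tr}_2L)x+\det L=0$ together with ${\rm tr}_2L_\pm=-\mu_\pm$; I would state this as the one nontrivial input and then the computation is entirely mechanical.
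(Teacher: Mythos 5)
Your proposal is correct and is essentially the argument the paper intends: the paper's own justification is exactly the remark preceding the theorem (commuting summands give $\exp(\sigma(L))=\exp(\sigma(L_+))\exp(\sigma(L_-))$, then Theorem~\ref{thm:expsimple} applied to each simple factor and the product expanded). Your added sign bookkeeping via ${\rm tr}_2L_\pm=-\mu_\pm$ with $\mu_+>0>\mu_-$ (which holds strictly since $\det L=\mu_+\mu_-\neq 0$ for nonsimple $L$) is a welcome explicit justification of why the hyperbolic branch goes with $L_+$ and the trigonometric branch with $L_-$.
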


We derive an alternative formula for $\exp(\sigma(L))$ as a polynomial in $\sigma(L)$.  By combining theorems \ref{thm:Lcube} and \ref{thm:commute} with theorem \ref{thm:exp1}, we obtain the following.

\begin{theorem}\label{thm:exp2}
If $L$ is a nonsimple Lorentz bivector, then
$$\exp(\sigma(L))=\alpha_0+\alpha_1\sigma(L)+\alpha_2\sigma(L)^2
                  +\alpha_3\sigma(L)^3$$
\begin{align*}
  \alpha_0 &\doteq \bar{c}_+\bar{c}_-
           -\tfrac{1}{8}(\mu_++\mu_-)\bar{s}_+\bar{s}_-
  & \alpha_2 &\doteq \tfrac{1}{2}\bar{s}_+\bar{s}_-\\
\alpha_1 &\doteq\tfrac{1}{4}N\bigl\{(\mu_-+3\mu_+)\bar{s}_+\bar{c}_-
                                   -(\mu_++3\mu_-)\bar{c}_+\bar{s}_-\bigr\}
  & \alpha_3 &\doteq N(\bar{c}_+\bar{s}_--\bar{s}_+\bar{c}_-)
\end{align*}
with $\mu_\pm$ as in equation \eqref{eq:mu}, $\bar{c}_\pm,\bar{s}_\pm$ as in theorem \ref{thm:exp1}, and $N\doteq 2/(\mu_+-\mu_-)$.\qed
\end{theorem}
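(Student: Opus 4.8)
The plan is to substitute the expressions from Theorems \ref{thm:Lcube} and \ref{thm:commute} into the formula of Theorem \ref{thm:exp1} and then gather the result by powers of $\sigma(L)$. First I would record the three substitutions in a convenient form. Setting $N=2/(\mu_+-\mu_-)$, Theorem \ref{thm:Lcube} gives $\sigma(L_+)=N\{\tfrac14(\mu_-+3\mu_+)\sigma(L)-\sigma(L)^3\}$ and $\sigma(L_-)=-N\{\tfrac14(\mu_++3\mu_-)\sigma(L)-\sigma(L)^3\}$, so each of $\sigma(L_\pm)$ is a linear combination of $\sigma(L)$ and $\sigma(L)^3$ with no constant term. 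Theorem \ref{thm:commute} together with the relation ${\rm tr}_2L=-(\mu_++\mu_-)$ from equation \eqref{eq:mu} gives $\sigma(L_+)\sigma(L_-)=-\tfrac18(\mu_++\mu_-)I+\tfrac12\sigma(L)^2$, a combination of $I$ and $\sigma(L)^2$.

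Next I would insert these three identities into
$$\exp(\sigma(L))=\bar c_+\bar c_-+\bar s_+\bar c_-\,\sigma(L_+)+\bar c_+\bar s_-\,\sigma(L_-)+\bar s_+\bar s_-\,\sigma(L_+)\sigma(L_-)$$
from Theorem \ref{thm:exp1} and read off the coefficient of each of $I,\sigma(L),\sigma(L)^2,\sigma(L)^3$. The constant picks up the explicit $\bar c_+\bar c_-$ plus the $I$-part of $\bar s_+\bar s_-\,\sigma(L_+)\sigma(L_-)$, giving $\alpha_0=\bar c_+\bar c_--\tfrac18(\mu_++\mu_-)\bar s_+\bar s_-$; the coefficient of $\sigma(L)^2$ comes solely from $\bar s_+\bar s_-\,\sigma(L_+)\sigma(L_-)$, giving $\alpha_2=\tfrac12\bar s_+\bar s_-$; and the coefficients of $\sigma(L)$ and $\sigma(L)^3$ come from the $\sigma(L_\pm)$ terms, giving $\alpha_1=\tfrac14N\{(\mu_-+3\mu_+)\bar s_+\bar c_--(\mu_++3\mu_-)\bar c_+\bar s_-\}$ and $\alpha_3=N(\bar c_+\bar s_--\bar s_+\bar c_-)$ after the two $-N\sigma(L)^3$ contributions are combined.

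Since the three input identities hold as operator equations, collecting terms is legitimate and completes the proof. There is no real obstacle here beyond sign bookkeeping: the one point to watch is the asymmetry $\sigma(L_-)=-N\{\cdots\}$ versus $\sigma(L_+)=+N\{\cdots\}$, which is what produces the minus signs in $\alpha_1$ and $\alpha_3$, and the sign in $-\tfrac18(\mu_++\mu_-)I$ coming from ${\rm tr}_2L=-(\mu_++\mu_-)$. One could optionally remark that, as a consistency check, evaluating at a simple bivector (formally $\mu_-\to0$) recovers Theorem \ref{thm:expsimple}, but this is not needed for the proof.
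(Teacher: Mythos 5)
Your proposal is correct and follows exactly the route the paper intends: substituting the expressions for $\sigma(L_\pm)$ from theorem \ref{thm:Lcube} and for $\sigma(L_+)\sigma(L_-)$ from theorem \ref{thm:commute} (with ${\rm tr}_2L=-(\mu_++\mu_-)$) into theorem \ref{thm:exp1} and collecting powers of $\sigma(L)$; the paper itself leaves these bookkeeping details to the reader. All four coefficients check out, including the signs coming from the $\pm N$ asymmetry.
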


\section{Spin representation of a Lorentz transformation}

The spin representation $\sigma:so(g)\rightarrow{\rm Hom}(V,V)$ on the Lie algebra level induces a projective representation $\Sigma:SO^+(g)\rightarrow{\rm SL}(V)/\pm$ on the Lie group level (\cite{Wigner}).  We would like to deduce an explicit formula for $\Sigma$.

We define a Lorentz transformation $\Lambda\in SO^+(g)$ to be {\bf simple} if it is the image of a simple Lorentz bivector under the exponential map.  A criterion for simplicity is that ${\rm tr}_2\Lambda=2({\rm tr}\Lambda-1)$.  Here, the second order trace may be computed from the general formula ${\rm tr}_2\Lambda=\tfrac{1}{2}({\rm tr}^2\Lambda-{\rm tr}\Lambda^2)$.  Moreover, it should be noted that for any proper orthochronous Lorentz transformation (simple or not), ${\rm tr}\Lambda\geq 0$.  See \cite{me} for more details.

We will need the following fact for computing the logarithm of a simple Lorentz transformation, as given in \cite{me}.  The special case when ${\rm tr}\Lambda=0$ will be handled later.

\begin{proposition}\label{prop:simplelog}
If $\Lambda\in SO^+(g)$ is a simple Lorentz transformation with ${\rm tr}\Lambda>0$, then $\Lambda=\exp(L)$ and ${\rm tr}_2L=-\mu$, where $L=\tfrac{1}{2}k(\Lambda-\Lambda^{-1})$ and
\begin{enumerate}
  \item if $0<{\rm tr}\Lambda<4$, then $\displaystyle k=\frac{\sqrt{-\mu}}{\sin\sqrt{-\mu}}$ and $\sqrt{-\mu}=\cos^{-1}(\tfrac{1}{2}{\rm tr}\Lambda-1)$,
  \item if ${\rm tr}\Lambda>4$, then $\displaystyle k=\frac{\sqrt{\mu}}{\sinh\sqrt{\mu}}$ and $\sqrt\mu=\cosh^{-1}(\tfrac{1}{2}{\rm tr}\Lambda-1)$,
  \item if ${\rm tr}\Lambda=4$, then $k=0$ and $\mu=0$.
\end{enumerate}
\end{proposition}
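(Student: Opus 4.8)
The plan is to verify directly that the proposed $L$ lies in $so(g)$, that it is simple, that $\exp(L)=\Lambda$, and that ${\rm tr}_2L=-\mu$. For the algebra membership, I would first observe that $\Lambda\in O(g)$ gives $\Lambda^{-1}=g^{-1}\Lambda^Tg$, so $(\Lambda-\Lambda^{-1})^Tg+g(\Lambda-\Lambda^{-1})=(\Lambda^T-g^{-1}\Lambda g)g+g(\Lambda-g^{-1}\Lambda^Tg)=\Lambda^Tg-\Lambda g + g\Lambda-\Lambda^Tg=g\Lambda-\Lambda g$; this is not obviously zero, so instead I would use that $\Lambda$ simple means $\Lambda=\exp(M)$ for some simple bivector $M$, and write $\Lambda-\Lambda^{-1}=\exp(M)-\exp(-M)=2\sinh(M)$, which by Theorem \ref{thm:Lsq}-type reasoning on the bivector side (the classical analogue: $M^2=-\tfrac14({\rm tr}_2 M)I_4$ fails, but $M$ satisfies its own quadratic) is a scalar multiple of $M$. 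Hence $L=\tfrac12 k(\Lambda-\Lambda^{-1})$ is a scalar multiple of $M$, so it is automatically a simple Lorentz bivector, and everything reduces to identifying the scalar.

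The core computation is therefore one-dimensional along the line ${\mathbb R}M$. Writing $M$ for the (unknown) simple bivector with $\exp(M)=\Lambda$, set $\nu\doteq {\rm tr}_2 M$; the classical simple-bivector identity $M^3=-\nu M$ (the bivector analogue of Theorem \ref{thm:Lsq}, cf.\ \cite{me,Coll}) lets me sum the exponential series exactly: $\Lambda=\exp(M)=\bar c + \bar s M$ with $\bar c,\bar s$ the same trig/hyperbolic coefficients as in Theorem \ref{thm:expsimple} but with $\nu$ in place of ${\rm tr}_2 L$ and without the factor $\tfrac12$ — i.e.\ $\bar c=\cos\sqrt\nu$, $\bar s = \sin\sqrt\nu/\sqrt\nu$ when $\nu>0$, and the hyperbolic versions when $\nu<0$. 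Then ${\rm tr}\Lambda = 4\bar c$ because ${\rm tr}M=0$ and ${\rm tr}I_4=4$; this gives $\tfrac12{\rm tr}\Lambda-1 = 2\bar c-1 = \cos\sqrt\nu$ (using $2\cos\theta-1 \ne \cos 2\theta$ — wait, here I must be careful: $2\bar c - 1$ is not $\cos\sqrt\nu$, so the half-angle structure must be tracked). Re-examining: ${\rm tr}\Lambda=4\bar c=4\cos\sqrt\nu$, so $\cos\sqrt\nu=\tfrac14{\rm tr}\Lambda$, not $\tfrac12{\rm tr}\Lambda-1$. Since the proposition asserts $\sqrt{-\mu}=\cos^{-1}(\tfrac12{\rm tr}\Lambda-1)$, the relation must be $\mu = \nu/?$ — I expect in fact $\mu$ is defined so that $-\mu = {\rm tr}_2 L = k^2 {\rm tr}_2 M\cdot(\text{something})$; the factor $\tfrac12$ in $L=\tfrac12 k(\Lambda-\Lambda^{-1})$ and the double-angle identity $\Lambda-\Lambda^{-1}=2\bar s M$ combine so that $L = k\bar s M$, and the half-angle appears through $\cos\sqrt\nu = 2\cos^2(\tfrac12\sqrt\nu)-1$; matching $\cos^{-1}(\tfrac12{\rm tr}\Lambda-1)$ forces $\sqrt{-\mu}=\tfrac12\sqrt\nu$ up to sign, i.e.\ $\nu = -4\mu$ in the rotational case. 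I would then solve $L=k\bar s M$ together with ${\rm tr}_2 L = (k\bar s)^2 {\rm tr}_2 M = -\mu$ to pin down $k$: this yields $k = \sqrt{-\mu}/\sin\sqrt{-\mu}$ after substituting $\bar s=\sin\sqrt\nu/\sqrt\nu$ and $\nu=-4\mu$ and a half-angle simplification $\sin\sqrt\nu = \sin 2\sqrt{-\mu}=2\sin\sqrt{-\mu}\cos\sqrt{-\mu}$ — again the bookkeeping of the factor of two is the crux.

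The three cases are the same computation specialized to the sign of $\nu$ (equivalently, to ${\rm tr}\Lambda$ lying in $(0,4)$, equal to $4$, or greater than $4$), with the $\nu=0$ degenerate case ($\Lambda-\Lambda^{-1}$ already equal to $2M$ with $M^3=0$) giving $k=0$, $\mu=0$; note $k=0$ does not destroy the formula $L=\tfrac12 k(\Lambda-\Lambda^{-1})$ because when $\mu=0$ one recovers $L=M$ only in a limiting sense, and I would remark that the stated $L$ still satisfies $\exp(L)=\Lambda$ trivially since $L$ may be taken nilpotent — this boundary case deserves an explicit sentence. The main obstacle I anticipate is precisely the half-angle/factor-of-two bookkeeping: reconciling the $\tfrac14{\rm tr}\Lambda$ that falls out of ${\rm tr}\Lambda=4\cos\sqrt\nu$ with the $\tfrac12{\rm tr}\Lambda-1$ appearing in the statement, and confirming that the $\tfrac12$ in the definition of $L$ is exactly what makes ${\rm tr}_2 L=-\mu$ hold with the $\mu$ so defined. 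Everything else — membership in $so(g)$, simplicity, and the exponential series summation — is routine given Theorem \ref{thm:expsimple} and its classical bivector analogue from \cite{me}.
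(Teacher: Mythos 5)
Your overall strategy is legitimate (the paper itself imports this proposition from \cite{me} without proof, so a direct verification is a sensible route): write $\Lambda=\exp(M)$ for a simple bivector $M$, note that $\Lambda-\Lambda^{-1}=2\sinh(M)$ is a scalar multiple of $M$ because $M^3=-({\rm tr}_2M)\,M$, and then identify the scalar. One preliminary remark: your $so(g)$ detour was unnecessary and only arose from an algebra slip --- since $\Lambda^Tg=g\Lambda^{-1}$ and $(\Lambda^{-1})^Tg=g\Lambda$, one has $(\Lambda-\Lambda^{-1})^Tg+g(\Lambda-\Lambda^{-1})=0$ identically, so $\Lambda-\Lambda^{-1}$ is always a Lorentz bivector.

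The genuine gap is in the central computation, and it is exactly at the factor-of-two bookkeeping you flag as the crux. You sum the series as $\exp(M)=\bar c\,I+\bar s\,M$, i.e.\ you transplant the spin-representation identity $\sigma(L)^2=-\tfrac14({\rm tr}_2L)I$ of theorem \ref{thm:Lsq} into the defining $4\times 4$ representation, where it is false: for simple $M$ with $\nu\doteq{\rm tr}_2M>0$ one has $M^2=-\nu P$ with $P$ the rank-two $g$-orthogonal projection onto the plane of $M$ (compare $P_L=-L^2/{\rm tr}_2L$ in the special-case section), so the correct summation is the Rodrigues-type formula $\exp(M)=I+\frac{\sin\sqrt\nu}{\sqrt\nu}M+\frac{1-\cos\sqrt\nu}{\nu}M^2$. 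Taking traces with ${\rm tr}\,M^2=-2\nu$ gives ${\rm tr}\Lambda=2+2\cos\sqrt\nu$, so $\cos\sqrt\nu=\tfrac12{\rm tr}\Lambda-1$ falls out directly, with $\nu=-\mu$ and no half-angle relation anywhere; your ${\rm tr}\Lambda=4\cos\sqrt\nu$, and the patch $\nu=-4\mu$ you introduce to reconcile it with the statement, are both wrong. The patch is not harmless: imposing $(k\bar s)^2\nu=-\mu$ with $\nu=-4\mu$ forces $k\bar s=\tfrac12$, i.e.\ $L=\tfrac12 M$, so $\exp(L)\neq\Lambda$, and the resulting $k$ differs from the stated $\sqrt{-\mu}/\sin\sqrt{-\mu}$ by a factor $2\cos\sqrt{-\mu}$. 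The correct finish is immediate from the odd part of the series: $\Lambda-\Lambda^{-1}=2\frac{\sin\sqrt\nu}{\sqrt\nu}M$, hence $L=\tfrac12 k(\Lambda-\Lambda^{-1})=M$ precisely when $k=\sqrt\nu/\sin\sqrt\nu=\sqrt{-\mu}/\sin\sqrt{-\mu}$; the case ${\rm tr}\Lambda>4$ is identical with hyperbolic functions and $\nu=-\mu<0$. Finally, your treatment of ${\rm tr}\Lambda=4$ does not stand as written: if $k=0$ then the displayed $L$ vanishes and $\exp(L)=I$, so the parabolic case ($M$ nilpotent, $\Lambda=I+M+\tfrac12M^2$, $\Lambda-\Lambda^{-1}=2M$) needs its own explicit sentence rather than the remark that $L$ may be taken nilpotent.
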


\begin{theorem}\label{thm:simplespin}
Suppose $\Lambda\in SO^+(g)$ is simple.  If ${\rm tr}\Lambda>0$, then up to an overall sign,
$$\Sigma(\Lambda)
  =\frac{1}{2\sqrt{{\rm tr}\Lambda}}\left\{{\rm tr}\Lambda
   +2\sigma(\Lambda-\Lambda^{-1})\right\}.
$$
\end{theorem}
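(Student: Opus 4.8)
The plan is to push the computation down to the Lie algebra, apply Theorem~\ref{thm:expsimple} there, and pull the answer back up using Proposition~\ref{prop:simplelog}. Since $\Lambda$ is simple with ${\rm tr}\,\Lambda>0$, Proposition~\ref{prop:simplelog} supplies $\Lambda=\exp(L)$ with $L=\tfrac12 k(\Lambda-\Lambda^{-1})$ a \emph{simple} Lorentz bivector and ${\rm tr}_2 L=-\mu$. As $\Sigma$ is the projective representation induced by $\sigma$, we have $\Sigma(\Lambda)=\pm\exp(\sigma(L))$, and since $L$ is simple Theorem~\ref{thm:expsimple} gives $\exp(\sigma(L))=\bar c+\bar s\,\sigma(L)=\bar c+\tfrac12 k\bar s\,\sigma(\Lambda-\Lambda^{-1})$. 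So the whole statement reduces to the two scalar identities $\bar c=\tfrac12\sqrt{{\rm tr}\,\Lambda}$ and $\tfrac12 k\bar s=1/\sqrt{{\rm tr}\,\Lambda}$; the sign ambiguity in $\sqrt{{\rm tr}\,\Lambda}$ and the $\pm$ from $\Sigma$ are one and the same ``overall sign'', since both coefficients flip together.

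Next I would verify these two identities along the trichotomy of Proposition~\ref{prop:simplelog}. If $0<{\rm tr}\,\Lambda<4$ then $\mu<0$, so ${\rm tr}_2 L=-\mu>0$ and Theorem~\ref{thm:expsimple} gives $\bar c=\cos\tfrac12\sqrt{-\mu}$ and $\bar s=\tfrac{2}{\sqrt{-\mu}}\sin\tfrac12\sqrt{-\mu}$. Feeding in $\sqrt{-\mu}=\cos^{-1}(\tfrac12{\rm tr}\,\Lambda-1)$ and the double-angle identity $\cos\theta=2\cos^2\tfrac\theta2-1$ turns this into ${\rm tr}\,\Lambda=4\cos^2\tfrac12\sqrt{-\mu}$, hence $\bar c=\tfrac12\sqrt{{\rm tr}\,\Lambda}$; and with $k=\sqrt{-\mu}/\sin\sqrt{-\mu}$ and $\sin\sqrt{-\mu}=2\sin\tfrac12\sqrt{-\mu}\cos\tfrac12\sqrt{-\mu}$, the coefficient $\tfrac12 k\bar s$ collapses to $1/(2\cos\tfrac12\sqrt{-\mu})=1/\sqrt{{\rm tr}\,\Lambda}$. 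The case ${\rm tr}\,\Lambda>4$ is the same computation verbatim with $\cos,\sin$ replaced by $\cosh,\sinh$ (now $\mu>0$ and ${\rm tr}_2 L<0$), using $\cosh\theta=2\cosh^2\tfrac\theta2-1$ and ${\rm tr}\,\Lambda=4\cosh^2\tfrac12\sqrt\mu$. The one genuine idea of the proof is this observation that the half-angle identities force the trace to come out as a perfect square.

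That leaves the borderline value ${\rm tr}\,\Lambda=4$, which I would dispatch by continuity from the two open cases, or directly: here ${\rm tr}_2 L=0$, so Theorem~\ref{thm:expsimple} gives $\exp(\sigma(L))=1+\sigma(L)$, and since a simple bivector with vanishing ${\rm tr}_2$ generates a null rotation one has $L^3=0$, whence $\Lambda-\Lambda^{-1}=2L$ and the right-hand side is $1+\tfrac12\sigma(\Lambda-\Lambda^{-1})$, exactly the claimed formula at ${\rm tr}\,\Lambda=4$. I do not expect a serious obstacle anywhere: the hypothesis ${\rm tr}\,\Lambda>0$ is precisely what keeps $\cos\tfrac12\sqrt{-\mu}$, hence $\sqrt{{\rm tr}\,\Lambda}$, bounded away from zero (the excluded value ${\rm tr}\,\Lambda=0$ is $\sqrt{-\mu}=\pi$), so the only real work is the case bookkeeping and keeping the two sign choices tied together.
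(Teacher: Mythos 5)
Your proposal is correct and follows essentially the same route as the paper: reduce to $\exp(\sigma(L))$ via Theorem~\ref{thm:expsimple} with $L=\tfrac12 k(\Lambda-\Lambda^{-1})$ from Proposition~\ref{prop:simplelog}, then use the half-angle identities to show $\bar c=\pm\tfrac12\sqrt{{\rm tr}\,\Lambda}$ and $\tfrac12 k\bar s=\pm 1/\sqrt{{\rm tr}\,\Lambda}$, absorbing the sign into the projective ambiguity. Your explicit treatment of the borderline case ${\rm tr}\,\Lambda=4$ (via $L^3=0$, so $\Lambda-\Lambda^{-1}=2L$) is a small addition of detail where the paper simply calls the remaining cases similar, but it is the same argument in substance.
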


\begin{proof}
Let $L$ be a simple Lorentz bivector such that $\Lambda=\exp(L)$.  By the general properties of the exponential map on a Lie algebra, $\Sigma(\Lambda)=\exp(\sigma(L))$.  Writing $L=\tfrac{1}{2}k(\Lambda-\Lambda^{-1})$ as in proposition \ref{prop:simplelog}, we have $\exp(\sigma(L))=\bar{c}+\tfrac{1}{2}k\bar{s}\sigma(\Lambda-\Lambda^{-1})$, according to theorem \ref{thm:expsimple}.  The values of $\bar{c},\bar{s}$ depend on the value of ${\rm tr}_2L$.  We consider the case ${\rm tr}_2L>0$, so that $\mu<0$ in the notation of proposition \ref{prop:simplelog}, which occurs when $0<{\rm tr}\Lambda<4$.  The other two cases are similar.  In this case, we have $\cos\sqrt{-\mu}=\tfrac{1}{2}{\rm tr}\Lambda-1$.  Now, $\bar{c}=\cos\tfrac{1}{2}\sqrt{{\rm tr}_2L}=\cos\tfrac{1}{2}\sqrt{-\mu}$.  Similarly,
$$\tfrac{1}{2}k\bar{s}
  =\frac{1}{2}\frac{\sqrt{-\mu}}{\sin\sqrt{-\mu}}\frac{2}{\sqrt{-\mu}}
   \sin\tfrac{1}{2}\sqrt{-\mu}
  =\frac{\sin\tfrac{1}{2}\sqrt{-\mu}}{\sin\sqrt{-\mu}}
  =\frac{1}{2\cos\tfrac{1}{2}\sqrt{-\mu}}
$$
On the other hand, we have $\cos^2\tfrac{1}{2}\sqrt{-\mu}=\tfrac{1}{2}(1-\cos\sqrt{-\mu})=\tfrac{1}{4}{\rm tr}\Lambda$, so that $\cos\tfrac{1}{2}\sqrt{-\mu}=\pm\tfrac{1}{2}\sqrt{{\rm tr}\Lambda}$.  Although the choice of $L$ determines the sign here, the choice of $L$ such that $\exp(L)=\Lambda$ is not unique.  Indeed, if we take $L'=\alpha L$, with $\alpha$ chosen such that $\sqrt{{\rm tr}_2L'}=\sqrt{{\rm tr}_2L}+2\pi$, then $\exp(L')=\Lambda$.  However, $\tfrac{1}{2}\sqrt{-\mu'}=\tfrac{1}{2}\sqrt{-\mu}+\pi$, so that $\exp(\sigma(L'))=-\exp(\sigma(L))$.
\end{proof}

To obtain an analogous formula for a nonsimple Lorentz transformation, we will make use of the fact that such a transformation is a product of commuting simple transformations.  Indeed, since $SO^+(g)$ is exponential, we may write $\Lambda=\exp(L)$ for some Lorentz bivector $L$.  Using the orthogonal decomposition $L=L_++L_-$, we have $\exp(L)=\exp(L_+)\exp(L_-)$.  Taking $\Lambda_\pm\doteq\exp(L_\pm)$, we may write $\Lambda=\Lambda_+\Lambda_-$, with $\Lambda_+,\Lambda_-$ commuting simple Lorentz transformations.  In \cite{me}, the following explicit formula is obtained.

\begin{proposition}\label{prop:factor}
If $\Lambda\in SO^+(g)$ is nonsimple, then $\Lambda=\Lambda_+\Lambda_-$, where $\Lambda_\pm$ are the commuting simple Lorentz transformations
$$\Lambda_\pm
  =\pm\frac{1}{2(c_+-c_-)}\bigl\{
     (1+2c_\pm)I-\Lambda^{-1}-(1+2c_\mp)\Lambda+\Lambda^2\bigr\}
$$
with $c_\pm=\tfrac{1}{4}({\rm tr}\Lambda\pm\sqrt\Delta)$ and $\Delta\doteq{\rm tr}^2\Lambda-4{\rm tr}_2\Lambda+8$, $c_+>1$, and $1\leq c_-<1$.
\end{proposition}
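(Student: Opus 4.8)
The groundwork is already in place in the paragraph preceding the statement: we have $\Lambda=\Lambda_+\Lambda_-$ with $\Lambda_\pm\doteq\exp(L_\pm)$ commuting simple Lorentz transformations, where $L=L_++L_-$ is the orthogonal decomposition of a nonsimple bivector $L$ with $\exp(L)=\Lambda$. The plan is first to pin down the geometry of this factorization and then to recover $\Lambda_\pm$ from $\Lambda$ by a spectral--projection argument. Since $L$ is nonsimple, $\det L=\mu_+\mu_-\neq 0$ by \eqref{eq:mu}, so neither ${\rm tr}_2L_\pm=-\mu_\pm$ vanishes; as $\mu_+>0>\mu_-$, the factor $L_+$ is hyperbolic (${\rm tr}_2L_+<0$) and $L_-$ elliptic (${\rm tr}_2L_->0$), so that $\Lambda_+$ is a boost and $\Lambda_-$ a rotation, both semisimple. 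Writing $L_\pm=u_\pm\wedge^gv_\pm$, Theorem \ref{thm:Lsq} shows the $2$--plane $V_\pm\doteq{\rm span}(u_\pm,v_\pm)={\rm im}(L_\pm)$ has nonzero Gram determinant ${\rm tr}_2L_\pm$, hence is $g$--nondegenerate; and mutual annihilation $L_+L_-=L_-L_+=0$ gives ${\rm im}(L_-)\subseteq\ker(L_+)={\rm im}(L_+)^\perp$, which by a dimension count forces $V_-=V_+^\perp$. Therefore $\mathbb{R}^4=V_+\oplus V_-$, with $\Lambda_\pm$ acting nontrivially on $V_\pm$ and as the identity on $V_\mp$.

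Let $\pi_\pm$ be the projections onto $V_\pm$ along $V_\mp$. Because $\Lambda$ restricts to $\Lambda_\pm$ on $V_\pm$ while $\Lambda_\pm$ restricts to the identity on $V_\mp$, one reads off
$$\Lambda_+=\Lambda\pi_++\pi_-,\qquad\Lambda_-=\pi_++\Lambda\pi_-,$$
so it remains to express $\pi_\pm$ through $\Lambda$. The key observation is that $M\doteq\Lambda+\Lambda^{-1}$ restricts to the scalar $\lambda_\pm+\lambda_\pm^{-1}$ on $V_\pm$, where $\lambda_\pm,\lambda_\pm^{-1}$ are the eigenvalues of $\Lambda|_{V_\pm}$; thus $M=2c_+\pi_++2c_-\pi_-$ with $c_+=\cosh\sqrt{\mu_+}$ and $c_-=\cos\sqrt{-\mu_-}$. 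Since $\mu_+>0$ we have $c_+>1$, while $c_-\in[-1,1]$ with the value $1$ excluded (it would force $\Lambda_-=I$ and hence $\Lambda=\Lambda_+$ simple); so $c_+\neq c_-$, and combining $M=2c_+\pi_++2c_-\pi_-$ with $I=\pi_++\pi_-$ gives $\pi_+=(M-2c_-I)/2(c_+-c_-)$ and $\pi_-=(2c_+I-M)/2(c_+-c_-)$. Substituting these back, using $\Lambda M=\Lambda^2+I$, and collecting terms is then a routine simplification that yields the stated closed forms for $\Lambda_\pm$.

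Finally I would identify $c_\pm$ with the displayed invariants. From ${\rm tr}\Lambda=2c_++2c_-$ and the identity ${\rm tr}_2\Lambda=\tfrac12({\rm tr}^2\Lambda-{\rm tr}\Lambda^2)$, a short eigenvalue computation gives ${\rm tr}_2\Lambda=4c_+c_-+2$, whence $\Delta={\rm tr}^2\Lambda-4{\rm tr}_2\Lambda+8=(2c_+-2c_-)^2$; fixing the sign of the square root by $c_+>c_-$ yields $c_\pm=\tfrac14({\rm tr}\Lambda\pm\sqrt\Delta)$, along with the inequalities $-1\le c_-<1<c_+$ already established.

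I expect the main obstacle to lie not in any computation but in the two semisimplicity points: deducing carefully from nonsimplicity of $L$ (that is, $\det L\neq 0$) that each $\Lambda_\pm$ is of pure boost or pure rotation type --- in particular excluding a null--rotation factor, which would be non--diagonalizable and would spoil the eigenvalue bookkeeping --- and handling the genuinely degenerate subcase $\sqrt{-\mu_-}=\pi$, where $\Lambda|_{V_-}=-I$. Working with $M=\Lambda+\Lambda^{-1}$ rather than with $\Lambda$ directly is exactly what makes this subcase harmless: $M$ always has precisely the two eigenvalues $2c_\pm$ and is always diagonalizable, so the projection formulas, and hence the final expressions for $\Lambda_\pm$, are uniform across all cases.
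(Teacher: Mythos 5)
Your argument is correct, and in fact it supplies something the paper itself does not: the paper states this proposition without proof, deferring to the reference \cite{me}, so there is no in-text proof to compare against. Your route is a clean, self-contained one: split ${\mathbb R}^4=V_+\oplus V_-$ into the two $g$-orthogonal nondegenerate planes carried by the summands of the orthogonal decomposition of $L$ (nondegeneracy from ${\rm tr}_2L_\pm=-\mu_\pm\neq 0$ via theorem \ref{thm:Lsq}, orthogonality from $L_+L_-=0$ and a dimension count), observe $\Lambda_+=\Lambda\pi_++\pi_-$ and $\Lambda_-=\pi_++\Lambda\pi_-$, and then solve for the projections from $M=\Lambda+\Lambda^{-1}=2c_+\pi_++2c_-\pi_-$ together with $I=\pi_++\pi_-$; the algebra using $\Lambda M=\Lambda^2+I$ does reproduce the stated formulas, and your identification ${\rm tr}\Lambda=2(c_++c_-)$, ${\rm tr}_2\Lambda=4c_+c_-+2$, $\Delta=4(c_+-c_-)^2$ pins down $c_\pm$ as claimed. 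Two small remarks. First, the scalarity of $M$ on each plane is most economically justified not by eigenvalue bookkeeping but by noting that $\Lambda|_{V_\pm}$ is a $2\times 2$ block of determinant $1$, so Cayley--Hamilton gives $\Lambda|_{V_\pm}+\Lambda|_{V_\pm}^{-1}=({\rm tr}\,\Lambda|_{V_\pm})I$ identically; equivalently, $L_\pm^2=\mu_\pm P_\pm$ makes $\exp(L_\pm)|_{V_\pm}$ an explicit $\cosh/\cos$ combination, which also disposes of the $\sqrt{-\mu_-}=\pi$ subcase you rightly flagged (there $\Lambda|_{V_-}=-I$ exactly, never a Jordan block). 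Second, your bound $-1\le c_-<1$ is the correct one, with $c_-=1$ excluded precisely because it would force $\Lambda_-=I$ and hence $\Lambda$ simple; the proposition's printed condition ``$1\leq c_-<1$'' is evidently a typo for this.
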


Using this decomposition, we use the fact that $\Sigma$ is a group homomorphism to write $\Sigma(\Lambda)=\Sigma(\Lambda_+)\Sigma(\Lambda_-)$, where each factor on the right hand side may be computed using theorem \ref{thm:simplespin}.  Note that $\Sigma(\Lambda_\pm)$ necessarily commute.

\begin{theorem}
If $\Lambda$ is a nonsimple proper orthochronous Lorentz transformation with $2+2{\rm tr}\Lambda+{\rm tr}_2\Lambda\neq 0$, then up to sign
\begin{multline*}
  \Sigma(\Lambda)
  =\frac{1}{2\sqrt{2+2{\rm tr}\Lambda+{\rm tr}_2\Lambda}}
    \bigl\{(2+{\rm tr}\Lambda+{\rm tr}_2\Lambda
                       -\tfrac{1}{4}{\rm tr}^2\Lambda)\\
    +({\rm tr}\Lambda+2)\sigma(\Lambda-\Lambda^{-1})
              -\sigma(\Lambda^2-\Lambda^{-2})
              +\sigma(\Lambda-\Lambda^{-1})^2\bigr\}
\end{multline*}
\end{theorem}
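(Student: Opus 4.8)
The plan is to imitate the proof of Theorem~\ref{thm:simplespin} one level up: apply $\Sigma$ to the factorization $\Lambda=\Lambda_+\Lambda_-$ of Proposition~\ref{prop:factor}, so that $\Sigma(\Lambda)=\Sigma(\Lambda_+)\Sigma(\Lambda_-)$, and then expand each factor with Theorem~\ref{thm:simplespin}. First I would record that $\Lambda_\pm$ are commuting simple Lorentz transformations with ${\rm tr}\Lambda_\pm=4c_\pm\ge 0$, so Theorem~\ref{thm:simplespin} applies to each and gives, up to sign,
$$\Sigma(\Lambda_\pm)=\frac{1}{2\sqrt{4c_\pm}}\bigl\{4c_\pm+2\sigma(\Lambda_\pm-\Lambda_\pm^{-1})\bigr\}
 =\frac{1}{2\sqrt{c_\pm}}\bigl\{c_\pm+\sigma(\Lambda_\pm-\Lambda_\pm^{-1})\bigr\}.$$
Multiplying the two factors yields
$$\Sigma(\Lambda)=\frac{1}{4\sqrt{c_+c_-}}\bigl\{c_+c_-+c_-\,\sigma(\Lambda_+-\Lambda_+^{-1})+c_+\,\sigma(\Lambda_--\Lambda_-^{-1})+\sigma(\Lambda_+-\Lambda_+^{-1})\,\sigma(\Lambda_--\Lambda_-^{-1})\bigr\},$$
using that the two $\sigma$-terms commute. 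So the whole problem reduces to re-expressing $c_+c_-$, $c_-\,\sigma(\Lambda_+-\Lambda_+^{-1})+c_+\,\sigma(\Lambda_--\Lambda_-^{-1})$, and the product term purely in terms of ${\rm tr}\Lambda$, ${\rm tr}_2\Lambda$, and $\sigma$ applied to powers of $\Lambda$.

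The scalar bookkeeping is straightforward: from $c_\pm=\tfrac14({\rm tr}\Lambda\pm\sqrt\Delta)$ with $\Delta={\rm tr}^2\Lambda-4{\rm tr}_2\Lambda+8$ one gets $c_++c_-=\tfrac12{\rm tr}\Lambda$ and $c_+c_-=\tfrac1{16}({\rm tr}^2\Lambda-\Delta)=\tfrac14({\rm tr}_2\Lambda-2)$, hence $4c_+c_-={\rm tr}_2\Lambda-2$ and $16c_+c_-=4{\rm tr}_2\Lambda-8=\bigl(2\sqrt{2+2{\rm tr}\Lambda+{\rm tr}_2\Lambda}\bigr)^2-\bigl(\text{?}\bigr)$; more usefully, I expect the normalizing denominator $4\sqrt{c_+c_-}=\sqrt{4{\rm tr}_2\Lambda-8}$ to get converted to $2\sqrt{2+2{\rm tr}\Lambda+{\rm tr}_2\Lambda}$ only after the half-angle/sign manipulations, exactly as a single $\sqrt{{\rm tr}\Lambda_\pm}$ became $\tfrac12\sqrt{{\rm tr}\Lambda}$ in the simple case. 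I would therefore first massage $\Sigma(\Lambda_\pm)$ using $c_\pm\ge 0$ and the identity $2c_\pm=\tfrac12{\rm tr}\Lambda\pm\tfrac12\sqrt\Delta$ to pull out a common factor, and only at the end identify $2+2{\rm tr}\Lambda+{\rm tr}_2\Lambda$ with the resulting radicand.

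The genuinely laborious step — and the one I expect to be the main obstacle — is eliminating $\Lambda_\pm$ in favour of $\Lambda$ inside the $\sigma$-terms. Here I would substitute the explicit polynomial expressions for $\Lambda_\pm$ from Proposition~\ref{prop:factor}, compute $\Lambda_\pm-\Lambda_\pm^{-1}$ as polynomials in $\Lambda,\Lambda^{-1},\Lambda^{\pm2}$ (using $\Lambda_+\Lambda_-=\Lambda$, $\Lambda_++\Lambda_-$ expressible via the proposition, and the fact that each $\Lambda_\pm$ is a quadratic polynomial in $\Lambda$ whose inverse is again such a polynomial because $\Lambda_\pm$ is simple), and apply linearity of $\sigma$. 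I anticipate that $c_-\,\sigma(\Lambda_+-\Lambda_+^{-1})+c_+\,\sigma(\Lambda_--\Lambda_-^{-1})$ collapses — the antisymmetric combination of the $\mp$ signs in $\Lambda_\pm$ against the $\pm\frac1{c_+-c_-}$ prefactor should kill the $\sqrt\Delta$ and leave a $\sigma$-polynomial with coefficients rational in ${\rm tr}\Lambda,{\rm tr}_2\Lambda$, ultimately $({\rm tr}\Lambda+2)\sigma(\Lambda-\Lambda^{-1})-\sigma(\Lambda^2-\Lambda^{-2})$. For the product term $\sigma(\Lambda_+-\Lambda_+^{-1})\sigma(\Lambda_--\Lambda_-^{-1})$, I would use that this equals $\sigma(\Lambda-\Lambda^{-1})^2$ minus $\sigma(\Lambda_+-\Lambda_+^{-1})^2+\sigma(\Lambda_--\Lambda_-^{-1})^2$ shifted appropriately — more precisely, since $\sigma(\Lambda-\Lambda^{-1})=\sigma(\Lambda_+-\Lambda_+^{-1})+\sigma(\Lambda_--\Lambda_-^{-1})$ by linearity and $\Lambda-\Lambda^{-1}=(\Lambda_+-\Lambda_+^{-1})+(\Lambda_--\Lambda_-^{-1})$ as Lorentz bivectors (each difference $\Lambda_\pm-\Lambda_\pm^{-1}$ being $2L_\pm/k_\pm$, a multiple of a simple bivector), Theorem~\ref{thm:commute} applied to the orthogonal decomposition $L=L_++L_-$ lets me write the cross term in terms of $\sigma(\Lambda-\Lambda^{-1})^2$ and scalars; the residual scalars are what produce the $2+{\rm tr}\Lambda+{\rm tr}_2\Lambda-\tfrac14{\rm tr}^2\Lambda$ correction to $\alpha_0$. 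Assembling these three pieces over the common denominator, clearing the $\sqrt{c_+c_-}$ against the square-half-angle identities, and absorbing the residual overall sign (legitimate since $\Sigma$ is only projective) gives the stated formula. I would flag at the outset that the hypothesis $2+2{\rm tr}\Lambda+{\rm tr}_2\Lambda\neq 0$ is precisely the condition that the new denominator not vanish, i.e.\ that $\Lambda_+\Lambda_-$ has ${\rm tr}\Sigma\ne 0$, the degenerate case being excluded exactly as the ${\rm tr}\Lambda=0$ case was deferred after Proposition~\ref{prop:simplelog}.
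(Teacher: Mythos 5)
Your overall strategy is exactly the paper's: factor $\Lambda=\Lambda_+\Lambda_-$ via Proposition~\ref{prop:factor}, use $\Sigma(\Lambda)=\Sigma(\Lambda_+)\Sigma(\Lambda_-)$ with Theorem~\ref{thm:simplespin} applied to each commuting simple factor, and then re-express everything in terms of $\Lambda$, ${\rm tr}\Lambda$, ${\rm tr}_2\Lambda$. However, there is a concrete error at the very first step that derails your scalar bookkeeping: it is not true that ${\rm tr}\Lambda_\pm=4c_\pm$. Taking the trace of the formula in Proposition~\ref{prop:factor} (using ${\rm tr}\Lambda^{-1}={\rm tr}\Lambda$ and ${\rm tr}\Lambda^2={\rm tr}^2\Lambda-2{\rm tr}_2\Lambda$) gives ${\rm tr}\Lambda_\pm=2(1+c_\pm)$, which is what the paper uses. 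With the correct value, ${\rm tr}\Lambda_+\,{\rm tr}\Lambda_-=4(1+c_+)(1+c_-)=2+2{\rm tr}\Lambda+{\rm tr}_2\Lambda$, so the normalizing denominator $\frac{1}{4\sqrt{{\rm tr}\Lambda_+{\rm tr}\Lambda_-}}$ is immediately $\frac{1}{4\sqrt{2+2{\rm tr}\Lambda+{\rm tr}_2\Lambda}}$ and the hypothesis is exactly the nonvanishing of ${\rm tr}\Lambda_-$; no half-angle reconciliation is needed or possible. With your value the radicand would be $16c_+c_-=4{\rm tr}_2\Lambda-8$, which is a genuinely different function of the invariants than $2+2{\rm tr}\Lambda+{\rm tr}_2\Lambda$, so the sign/half-angle manipulations you defer to ``the end'' cannot rescue it -- the fix must come from correcting the trace.

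A second, smaller gap is in your treatment of the cross term. Theorem~\ref{thm:commute} expresses $\sigma(L_+)\sigma(L_-)$ in terms of $\sigma(L)^2$, but $\sigma(L)$ is \emph{not} proportional to $\sigma(\Lambda-\Lambda^{-1})$: by Proposition~\ref{prop:simplelog}, $L_\pm=\tfrac{1}{2}k_\pm(\Lambda_\pm-\Lambda_\pm^{-1})$ with different weights $k_+\neq k_-$, whereas $\Lambda-\Lambda^{-1}=(\Lambda_+-\Lambda_+^{-1})+(\Lambda_--\Lambda_-^{-1})$ is the unweighted sum (this last identity is correct, and is consistent with the paper's equation~\eqref{eq:dLambdapm}). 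The workable version of your idea is to square $\sigma(\Lambda-\Lambda^{-1})=\sigma(\Lambda_+-\Lambda_+^{-1})+\sigma(\Lambda_--\Lambda_-^{-1})$, note that the two terms commute, and use Theorem~\ref{thm:Lsq} plus Proposition~\ref{prop:simplelog} to evaluate the squares as scalars, $\sigma(\Lambda_\pm-\Lambda_\pm^{-1})^2=\tfrac{4}{k_\pm^2}\sigma(L_\pm)^2=(c_\pm^2-1)I$; this isolates the cross term as $\tfrac{1}{2}\{\sigma(\Lambda-\Lambda^{-1})^2-(c_+^2+c_-^2-2)I\}$, in agreement with the paper's equation~\eqref{eq:4summand} since $c_+^2+c_-^2-2=\tfrac{1}{4}({\rm tr}^2\Lambda-2{\rm tr}_2\Lambda-4)$. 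This corrected route is, if anything, slightly more direct than the paper's (which derives the same scalar via two quadratic relations), but as written your appeal to Theorem~\ref{thm:commute} does not deliver an expression in $\sigma(\Lambda-\Lambda^{-1})^2$. Finally, the middle terms still require carrying out the elimination you only anticipate, i.e.\ establishing equation~\eqref{eq:dLambdapm} from Proposition~\ref{prop:factor} and $\Lambda^{-1}=g^{-1}\Lambda^Tg$; that computation is where the coefficients $({\rm tr}\Lambda+2)$ and $-1$ actually come from.
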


\begin{proof}
For brevity, we set $\tau_k^\pm\doteq{\rm tr}_k\Lambda_\pm$.  From theorem \ref{thm:simplespin}, we compute $\Sigma(\Lambda)=\Sigma(\Lambda_+)\Sigma(\Lambda_-)$ to be:
\begin{multline}\label{eq:prod}
  \Sigma(\Lambda)
  =\frac{1}{4\sqrt{\tau_1^+\tau_1^-}}
   \bigl\{\tau_1^+\tau_1^-
          +2\tau_1^+\sigma(\Lambda_--\Lambda_-^{-1})
          +2\tau_1^-\sigma(\Lambda_+-\Lambda_+^{-1})\\
          +4\sigma(\Lambda_+-\Lambda_+^{-1})\sigma(\Lambda_--\Lambda_-^{-1})
   \bigr\}
\end{multline}
Now from proposition \ref{prop:factor}, one shows that ($\star$) $\tau_1\doteq{\rm tr}\Lambda=2(c_++c_-)$ and $\tau_2\doteq{\rm tr}_2\Lambda=4c_+c_-+2$, and that ($\star\star$) $\tau_1^\pm=2(1+c_\pm)$.  Moreover, since for any Lorentz transformation $\Lambda^{-1}=g^{-1}\Lambda^Tg$, we find that
\begin{equation}\label{eq:dLambdapm}
  \Lambda_\pm-\Lambda_\pm^{-1}=\mp\frac{1}{2(c_+-c_-)}
                                  \bigl\{2c_\mp(\Lambda-\Lambda^{-1})
                                         -(\Lambda^2-\Lambda^{-2})\bigr\}
\end{equation}
We now rewrite the summands of equation \eqref{eq:prod} in terms of $\Lambda$ and its first and second order traces.  For the first summand, using ($\star\star$) and $(\star)$ we obtain
\begin{equation}\label{eq:1summand}
  \tau_1^+\tau_1^-=4(1+c_+)(1+c_-)
  =2+2\tau_1+\tau_2
\end{equation}
For the second and third summands in \eqref{eq:prod}, using ($\star\star$) and \eqref{eq:dLambdapm} one computes $\tau_1^+\sigma(\Lambda_--\Lambda_-^{-1})+\tau_1^-\sigma(\Lambda_+-\Lambda_+^{-1})=2(1+c_++c_-)\sigma(\Lambda-\Lambda^{-1})-\sigma(\Lambda^2-\Lambda^{-2})$.  Thus by ($\star$),
\begin{equation}\label{eq:23summand}
  \tau_1^+\sigma(\Lambda_--\Lambda_-^{-1})
    +\tau_1^-\sigma(\Lambda_+-\Lambda_+^{-1})
  =(\tau_1+2)\sigma(\Lambda-\Lambda^{-1})
    -\sigma(\Lambda^2-\Lambda^{-2})
\end{equation}
For the fourth summand in \eqref{eq:prod}, we similarly compute (although we also need to use the fact that $\sigma(\Lambda-\Lambda^{-1})$ and $\sigma(\Lambda^2-\Lambda^{-2})$ commute, as $\sigma$ is a Lie algebra homomorphism)
\begin{multline}\label{eq:4summand1}
  4\sigma(\Lambda_+-\Lambda_+^{-1})\sigma(\Lambda_--\Lambda_-^{-1})
  =\frac{1}{(c_+-c_-)^2}
    \bigl\{(2-\tau_2)\sigma(\Lambda-\Lambda^{-1})^2\\
           +\tau_1\sigma(\Lambda-\Lambda^{-1})\sigma(\Lambda^2-\Lambda^{-2})
           -\sigma(\Lambda^2-\Lambda^{-2})^2\bigr\}
\end{multline}
However, the terms in this expression are not algebraically independent.  To find a relation, we make use of the fact that $\Lambda_\pm$ are simple: from theorem \ref{thm:Lsq}, $\sigma(L_\pm)^2=-\tfrac{1}{4}{\rm tr}_2L_\pm$, where $L_\pm$ are such that $\exp(L_\pm)=\Lambda_\pm$, and may be computed using proposition \ref{prop:simplelog}.  Indeed, $L_+=\tfrac{1}{2}k_+(\Lambda_+-\Lambda_+^{-1})$, with ${\rm tr}_2L_+=-\mu_+$, $c_+=\cosh\sqrt{\mu_+}$, and $k_+=\sqrt{\mu_+}/\sinh\sqrt{\mu_+}$.  The equation $\sigma(L_+)^2=-\tfrac{1}{4}{\rm tr}_2L_+$ and \eqref{eq:dLambdapm} then lead to
\begin{multline*}
  4c_-^2\sigma(\Lambda-\Lambda^{-1})^2
  -4c_-\sigma(\Lambda-\Lambda^{-1})\sigma(\Lambda^2-\Lambda^{-2})
  +\sigma(\Lambda^2-\Lambda^{-2})^2\\
  =4(c_+-c_-)^2(c_+^2-1)
\end{multline*}
(note that $\sinh^2\sqrt{\mu_+}=c_+^2-1$).  Similarly, $\sigma(L_-)^2=-\tfrac{1}{4}{\rm tr}_2L_-$, proposition \ref{prop:simplelog}, and \eqref{eq:dLambdapm} imply
\begin{multline*}
  4c_+^2\sigma(\Lambda-\Lambda^{-1})^2
  -4c_+\sigma(\Lambda-\Lambda^{-1})\sigma(\Lambda^2-\Lambda^{-2})
  +\sigma(\Lambda^2-\Lambda^{-2})^2\\
  =4(c_+-c_-)^2(c_-^2-1)
\end{multline*}
Adding these two equations together and using ($\star$) then yields the relation
\begin{multline}\label{eq:4summand2}
  (\tau_1^2-2\tau_2+4)\sigma(\Lambda-\Lambda^{-1})^2
  -2\tau_1\sigma(\Lambda-\Lambda^{-1})\sigma(\Lambda^2-\Lambda^{-2})\\
  +2\sigma(\Lambda^2-\Lambda^{-2})^2
  =(c_+-c_-)^2(\tau_1^2-2\tau_2-4)
\end{multline}
Combining \eqref{eq:4summand1} and \eqref{eq:4summand2} together, we see that we may write the fourth summand in \eqref{eq:prod} as
\begin{equation}\label{eq:4summand}
  4\sigma(\Lambda_+-\Lambda_+^{-1})\sigma(\Lambda_--\Lambda_-^{-1})
  =2\sigma(\Lambda-\Lambda^{-1})^2
   -\tfrac{1}{2}(\tau_1^2-2\tau_2-4)
\end{equation}
Combining \eqref{eq:prod}, \eqref{eq:1summand}, \eqref{eq:23summand}, and \eqref{eq:4summand} give the desired formula for $\Sigma(\Lambda)$.
\end{proof}

\subsection{Special case}

We consider the case when a Lorentz transformation $\Lambda\in SO^+(g)$ satisfies the identity
\begin{equation}\label{eq:special}
  2+2{\rm tr}\Lambda+{\rm tr}_2\Lambda=0
\end{equation}
In the case when $\Lambda$ is simple, so that ${\rm tr}_2\Lambda=2({\rm tr}\Lambda-1)$, this condition reduces to ${\rm tr}\Lambda=0$.  It should be noted that a simple Lorentz transformation is traceless only if $\Lambda=\exp(L)$ for some simple Lorentz bivector with ${\rm tr}_2L=\pi^2$.  A nonsimple transformation satisfies \eqref{eq:special} only if its decomposition in proposition \ref{prop:factor}, $\Lambda=\Lambda_+\Lambda_-$, is such that the simple factor $\Lambda_-$ is traceless.

We will not be able to obtain an explicit formula for the spin representation $\Sigma(\Lambda)$ of such a Lorentz transformation.  However, we can give an algorithm.  The key lies in the following two facts from \cite{me}.

\begin{proposition}
Let $u,v$ be four--vectors, and set $L\doteq u\wedge^gv$.  The two--plane ${\mathcal P}$ spanned by $u,v$ is nondegenerate (that is, $g$ is nondegenerate when restricted to ${\mathcal P}$) if and only if ${\rm tr}_2L\neq 0$, in which case $P_L\doteq -L^2/{\rm tr}_2L$ is $g$--orthogonal projection onto ${\mathcal P}$.  Conversely, if $P$ is $g$--orthogonal projection onto a two--plane, then the two--plane is nondegenerate and $P=P_L$, where $L=u\wedge^gv$ and $u,v$ are any linearly independent four--vectors in the image of $P$.
\end{proposition}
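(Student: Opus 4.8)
The plan is to prove the equivalence first and the converse second. Assume throughout that $u,v$ are linearly independent, so that ${\mathcal P}=\operatorname{span}(u,v)$ genuinely is a two--plane (if $u,v$ are dependent then $L=0$, ${\rm tr}_2L=0$, and there is nothing to assert). By Theorem~\ref{thm:Lsq}, ${\rm tr}_2L=g(u,u)g(v,v)-g(u,v)^2$, which is exactly the determinant of the Gram matrix of $(u,v)$; since that matrix represents $g|_{\mathcal P}$ in the basis $\{u,v\}$, the form $g|_{\mathcal P}$ is nondegenerate if and only if ${\rm tr}_2L\neq 0$. That disposes of the equivalence.

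Now suppose ${\rm tr}_2L\neq 0$ and set $P_L=-L^2/{\rm tr}_2L$. First, $P_L$ is $g$--self-adjoint: from $L^Tg=-gL$ one gets $(L^2)^Tg=(L^T)^2g=L^T(-gL)=-(-gL)L=gL^2$, that is, $g(L^2x,y)=g(x,L^2y)$. Second, equation~\eqref{eq:wedge} shows $\operatorname{im}L\subseteq{\mathcal P}$, hence $\operatorname{im}P_L\subseteq{\mathcal P}$, while $Lw=g(v,w)u-g(u,w)v=0$ for $w\in{\mathcal P}^\perp$, so $P_L$ annihilates ${\mathcal P}^\perp$. Third, a direct computation from~\eqref{eq:wedge} (or Cayley--Hamilton applied to the traceless operator $L|_{\mathcal P}$, whose determinant one checks equals ${\rm tr}_2L$) gives $L^2u=-({\rm tr}_2L)u$ and $L^2v=-({\rm tr}_2L)v$, so $P_L$ is the identity on ${\mathcal P}$. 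Thus $P_L$ is a $g$--self-adjoint idempotent with image ${\mathcal P}$; as ${\mathcal P}$ is nondegenerate this forces $\ker P_L={\mathcal P}^\perp$, and $P_L$ is the $g$--orthogonal projection onto ${\mathcal P}$.

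For the converse, let $P$ be a $g$--orthogonal projection onto a two--plane $\mathcal{Q}=\operatorname{im}P$ (so $P^2=P$, $P$ is $g$--self-adjoint, $\dim\mathcal{Q}=2$). That $\mathcal{Q}$ is nondegenerate follows from self-adjointness: if $w\in\mathcal{Q}$ is $g$--orthogonal to all of $\mathcal{Q}$, then $g(w,z)=g(Pw,z)=g(w,Pz)=0$ for every $z$ because $Pz\in\mathcal{Q}$, so $w=0$ by nondegeneracy of $g$ on ${\mathbb R}^4$. Picking linearly independent $u,v$ in the image of $P$ gives $\operatorname{span}(u,v)=\mathcal{Q}$, and the first part shows ${\rm tr}_2L\neq 0$ with $P_L=-L^2/{\rm tr}_2L$ equal to a $g$--orthogonal projection onto $\mathcal{Q}$. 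Finally, the $g$--orthogonal projection onto a fixed nondegenerate two--plane is unique (any $g$--self-adjoint idempotent with image $\mathcal{Q}$ has kernel $\mathcal{Q}^\perp$ by the argument just given, hence is the identity on $\mathcal{Q}$ and zero on $\mathcal{Q}^\perp$, which determines it), so $P=P_L$.

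The only step demanding genuine care is the identity $L^2u=-({\rm tr}_2L)u$ together with its companion for $v$; everything else is linear-algebra bookkeeping. I do not expect a real obstacle, since Theorem~\ref{thm:Lsq} already supplies the one substantive ingredient, namely the value of ${\rm tr}_2L$.
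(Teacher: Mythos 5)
Your argument is correct, and it is worth noting that the paper itself offers no proof to compare against: this proposition is imported verbatim from reference \cite{me} (``the following two facts from \cite{me}''), so your write-up supplies a self-contained verification that the present paper omits. The ingredients you use are exactly the natural ones: ${\rm tr}_2L=g(u,u)g(v,v)-g(u,v)^2$ from theorem \ref{thm:Lsq} is the Gram determinant of $g|_{\mathcal P}$, which settles the nondegeneracy equivalence; the identity $L^Tg=-gL$ gives $g$--self-adjointness of $L^2$; equation \eqref{eq:wedge} gives ${\rm im}\,L\subseteq{\mathcal P}$, $L|_{{\mathcal P}^\perp}=0$, and the key relation $L^2u=-({\rm tr}_2L)u$, $L^2v=-({\rm tr}_2L)v$ (which I checked directly, and which also follows from Cayley--Hamilton for $L|_{\mathcal P}$, traceless with determinant ${\rm tr}_2L$, as you remark); and your uniqueness argument for the $g$--orthogonal projection onto a nondegenerate two--plane correctly handles the converse, including the observation that self-adjointness of $P$ forces ${\rm im}\,P$ to be nondegenerate. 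I see no gap.
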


\begin{proposition}
If $\Lambda$ is a simple Lorentz transformation with ${\rm tr}\Lambda=0$, then $\Lambda^2=I$, $P_\Lambda\doteq\tfrac{1}{2}(I-\Lambda)$ is $g$--orthogonal projection onto a nondegenerate two--plane, and $-\pi^2P_\Lambda$ is the square of a simple Lorentz bivector $L_\Lambda$ with $\exp(L_\Lambda)=\Lambda$ and ${\rm tr}_2L_\Lambda=\pi^2$.
\end{proposition}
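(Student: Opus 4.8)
The plan is to read everything off the closed form of the exponential of a simple bivector. Since $\Lambda$ is a simple Lorentz transformation, $\Lambda=\exp(L)$ for some simple Lorentz bivector $L$; let $\mathcal{P}$ denote the two--plane spanned by the factors of $L$. First I would make $\exp(L)$ explicit. If ${\rm tr}_2L=0$ then $\exp(L)=I+L+\tfrac12L^2$ (a null rotation), which has trace $4$, so this case does not arise. If ${\rm tr}_2L\neq 0$, the preceding proposition gives $L^2=-({\rm tr}_2L)P_L$, where $P_L$ is the $g$--orthogonal projection onto $\mathcal{P}$; since $L$ annihilates $\mathcal{P}^\perp=\ker P_L$, one has $LP_L=L$, hence $L^3=-({\rm tr}_2L)L$. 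Feeding this power reduction into the exponential series in the same manner as the proof of Theorem \ref{thm:expsimple} gives, with $\omega\doteq\sqrt{{\rm tr}_2L}$,
$$\exp(L)=I+\tfrac{\sin\omega}{\omega}\,L+\tfrac{1-\cos\omega}{\omega^2}\,L^2\quad({\rm tr}_2L>0),$$
and the analogous hyperbolic formula when ${\rm tr}_2L<0$ (cf.\ \cite{Coll}).

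Next I would take traces. Using ${\rm tr}L=0$, ${\rm tr}L^2=-2\,{\rm tr}_2L$, and ${\rm tr}I=4$, one gets ${\rm tr}\Lambda=2+2\cos\sqrt{{\rm tr}_2L}$ when ${\rm tr}_2L>0$ and ${\rm tr}\Lambda=2+2\cosh\sqrt{-{\rm tr}_2L}\geq 4$ when ${\rm tr}_2L<0$. Hence ${\rm tr}\Lambda=0$ forces ${\rm tr}_2L>0$ with $\sqrt{{\rm tr}_2L}=(2k+1)\pi$ for some integer $k\geq 0$. Since $\exp$ is not injective I am free to replace $L$ by $L_\Lambda\doteq L/(2k+1)$: this is again a simple bivector, ${\rm tr}_2L_\Lambda=\pi^2$ (as ${\rm tr}_2$ is quadratic), and substituting $\omega=\pi$ into the formula above --- the sine term vanishes and $1-\cos\pi=2$ --- shows $\exp(L_\Lambda)=I+\tfrac{2}{\pi^2}L_\Lambda^2$, which equals $\exp(L)=\Lambda$. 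This already yields a simple bivector $L_\Lambda$ with $\exp(L_\Lambda)=\Lambda$ and ${\rm tr}_2L_\Lambda=\pi^2$, and rearranging gives $L_\Lambda^2=\tfrac{\pi^2}{2}(\Lambda-I)=-\pi^2\cdot\tfrac12(I-\Lambda)=-\pi^2P_\Lambda$.

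It then remains to identify $P_\Lambda$ and deduce $\Lambda^2=I$. Since ${\rm tr}_2L_\Lambda=\pi^2\neq 0$, the preceding proposition applied to $L_\Lambda$ shows that $P_\Lambda=-L_\Lambda^2/{\rm tr}_2L_\Lambda$ is the $g$--orthogonal projection onto the nondegenerate two--plane spanned by the factors of $L_\Lambda$; in particular $P_\Lambda^2=P_\Lambda$. Finally $\Lambda=I+\tfrac{2}{\pi^2}L_\Lambda^2=I-2P_\Lambda$, so $\Lambda^2=(I-2P_\Lambda)^2=I-4P_\Lambda+4P_\Lambda^2=I$; alternatively, $L_\Lambda^4=-\pi^2L_\Lambda^2$ gives $\Lambda^2=I$ directly by squaring $\Lambda=I+\tfrac{2}{\pi^2}L_\Lambda^2$. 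I expect the only real obstacle to be bookkeeping: keeping the three sign cases for ${\rm tr}_2L$ straight and handling the non-uniqueness of the logarithm $L$ so that the choice of $L_\Lambda$ with ${\rm tr}_2L_\Lambda=\pi^2$ is justified. Once the closed form for $\exp(L)$ is available, the remaining steps are routine.
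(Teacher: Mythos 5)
Your proposal is correct. Note that the paper itself gives no proof of this proposition --- it is imported verbatim from \cite{me} (``the following two facts from \cite{me}'') --- so there is no internal argument to compare against; what you have written is a self-contained derivation, and it checks out. Your route is the $4\times 4$ analogue of the proof of theorem \ref{thm:expsimple}: from $L^2=-({\rm tr}_2L)P_L$ (valid precisely when ${\rm tr}_2L\neq0$, and you correctly split off the null case ${\rm tr}_2L=0$, where ${\rm tr}\exp(L)=4$) you get $L^3=-({\rm tr}_2L)L$ and hence the closed form $\exp(L)=I+\tfrac{\sin\omega}{\omega}L+\tfrac{1-\cos\omega}{\omega^2}L^2$, which is Coll's formula \cite{Coll} restricted to simple bivectors. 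The trace computation ${\rm tr}\Lambda=2+2\cos\omega$ (elliptic case) versus $2+2\cosh\omega\geq4$ (hyperbolic case) correctly forces $\omega=(2k+1)\pi$, and the rescaling $L_\Lambda=L/(2k+1)$ is legitimately justified not by ``non-injectivity of $\exp$'' in the abstract but by your explicit check that the sine term dies and the cosine term is unchanged, so $\exp(L_\Lambda)=I+\tfrac{2}{\pi^2}L_\Lambda^2=\exp(L)=\Lambda$; rescaling preserves simplicity and scales ${\rm tr}_2$ quadratically, giving ${\rm tr}_2L_\Lambda=\pi^2$. The identification $P_\Lambda=\tfrac12(I-\Lambda)=-L_\Lambda^2/{\rm tr}_2L_\Lambda=P_{L_\Lambda}$ then hands you nondegeneracy of the two--plane and idempotency from the preceding proposition, and $\Lambda=I-2P_\Lambda$ gives $\Lambda^2=I$. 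The only things worth stating explicitly in a final write-up are the two small facts you use silently: that $\alpha(u\wedge^gv)=(\alpha u)\wedge^gv$ is again simple, and that $LP_L=L$ because $L$ annihilates the $g$--orthogonal complement of its plane; both are immediate.
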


\begin{theorem}
Suppose $\Lambda\in SO^+(g)$ is simple with ${\rm tr}\Lambda=0$.  Let $u,v$ be any two linearly independent vectors in the image of $P_\Lambda=\tfrac{1}{2}(I-\Lambda)$.  Then up to sign, $\Sigma(\Lambda)=(2/\sqrt{{\rm tr}_2(u\wedge^gv)})\sigma(u\wedge^gv)$.
\end{theorem}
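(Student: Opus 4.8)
The plan is to realize $\Sigma(\Lambda)$ as the exponential of $\sigma$ applied to the explicit simple logarithm of $\Lambda$ furnished by the preceding proposition, evaluate that exponential via Theorem \ref{thm:expsimple}, and then rewrite the answer in terms of $u\wedge^g v$. First I would apply the preceding proposition to obtain a simple Lorentz bivector $L_\Lambda$ with $\exp(L_\Lambda)=\Lambda$, $\mathrm{tr}_2 L_\Lambda=\pi^2$, and $L_\Lambda^2=-\pi^2 P_\Lambda$. By the compatibility of $\Sigma$ with the exponential map (already invoked in the proof of Theorem \ref{thm:simplespin}), $\Sigma(\Lambda)=\exp(\sigma(L_\Lambda))$. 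Since $L_\Lambda$ is simple with $\mathrm{tr}_2 L_\Lambda=\pi^2>0$, Theorem \ref{thm:expsimple} applies with $\tfrac12\sqrt{\mathrm{tr}_2 L_\Lambda}=\tfrac\pi2$, so that $\bar c=\cos\tfrac\pi2=0$ and $\bar s=\tfrac2\pi\sin\tfrac\pi2=\tfrac2\pi$; hence $\Sigma(\Lambda)=\tfrac2\pi\,\sigma(L_\Lambda)$.

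Next I would show that $L_\Lambda$ is a scalar multiple of $u\wedge^g v$. Writing $w\doteq u\wedge^g v$, the first proposition of this subsection gives $P_\Lambda=P_w=-w^2/\mathrm{tr}_2 w$, since $u,v$ are linearly independent vectors in the image of $P_\Lambda$; in particular $L_\Lambda^2$ and $w^2$ are both nonzero scalar multiples of $P_\Lambda$, so both $L_\Lambda$ and $w$ are simple bivectors whose image is the nondegenerate two-plane $\mathcal P$ onto which $P_\Lambda$ projects and which both annihilate $\mathcal P^\perp$. Restricted to $\mathcal P$, each is skew with respect to the nondegenerate form $g|_{\mathcal P}$, and the space of such transformations of a two-dimensional space is one-dimensional; therefore $L_\Lambda=\lambda\,w$ for some scalar $\lambda$. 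Comparing second-order traces and using that $\mathrm{tr}_2$ is homogeneous of degree two (it equals $-\tfrac12\mathrm{tr}L^2$), we get $\pi^2=\mathrm{tr}_2 L_\Lambda=\lambda^2\,\mathrm{tr}_2 w$, so $\mathrm{tr}_2 w>0$ and $\lambda=\pm\pi/\sqrt{\mathrm{tr}_2 w}$. Substituting $\sigma(L_\Lambda)=\lambda\,\sigma(w)$ into $\Sigma(\Lambda)=\tfrac2\pi\sigma(L_\Lambda)$ then yields $\Sigma(\Lambda)=\pm(2/\sqrt{\mathrm{tr}_2 w})\,\sigma(w)$, which is the asserted formula up to sign.

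The routine parts are the evaluation of the sine and cosine coefficients and the degree-two scaling of $\mathrm{tr}_2$; the one step requiring a genuine argument is the proportionality of $L_\Lambda$ and $u\wedge^g v$, which rests on the two-dimensionality of $\mathcal P$ together with the identification $P_\Lambda=P_w$ from the first proposition. The sign ambiguity is expected and harmless, exactly as in Theorem \ref{thm:simplespin}, because $\Sigma$ is only a projective representation; it is already visible in the two choices $\lambda=\pm\pi/\sqrt{\mathrm{tr}_2 w}$.
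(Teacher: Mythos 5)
Your proposal is correct and follows essentially the same route as the paper: identify $P_\Lambda$ with $P_{u\wedge^g v}$ via the projection proposition, conclude $L_\Lambda=\lambda\,u\wedge^g v$, fix $\lambda=\pm\pi/\sqrt{{\rm tr}_2(u\wedge^g v)}$ by comparing second-order traces, and evaluate $\exp\sigma(L_\Lambda)=(2/\pi)\sigma(L_\Lambda)$ from Theorem \ref{thm:expsimple}. The only cosmetic difference is how the proportionality $L_\Lambda\propto u\wedge^g v$ is justified (you use one-dimensionality of the space of $g$-skew maps on the two-plane, the paper expands $L_\Lambda=u'\wedge^g v'$ in the basis $u,v$), which is an equally valid argument.
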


\begin{proof}
Set $L\doteq u\wedge^gv$.  Then $P_L$ is $g$--orthogonal projection onto the (necessarily nondegenerate) two--plane ${\mathcal P}$ spanned by $u,v$.  By uniqueness of $g$--orthogonal projection, we must have $P_L=P_\Lambda$.  Now $L_\Lambda=u'\wedge^gv'$, where $u',v'$ lie in ${\mathcal P}$.  Writing $u'=au+bv$ and $v'=cu+dv$, we compute that $u'\wedge^gv'=(ad-bc)u\wedge^gv$; i.e., $L_\Lambda=\alpha L$ for some scalar $\alpha$.  Taking 2--traces, we get $\pi^2={\rm tr}_2L_\Lambda=\alpha^2{\rm tr}_2L$, so that $\alpha=\pm\pi/\sqrt{{\rm tr}_2L}$.  Since $\exp(L_\Lambda)=\Lambda$, $\Sigma(\Lambda)=\exp\sigma(L_\Lambda)$.  On the other hand, by theorem \ref{thm:expsimple}, $\exp\sigma(L_\Lambda)=(2/\pi)\sigma(L_\Lambda)=(2\alpha/\pi)\sigma(L)$.
\end{proof}


\end{document}